\newcommand{\cnt}{\ensuremath{\mathsf{cnt}}}
\newcommand{\Card}[1]{|#1|}
\newcommand{\var}[1]{\mathsf{Var}({#1})}
\newcommand{\stg}[1]{\mathsf{sstg}({#1})}
\newcommand{\atg}[1]{\mathsf{astg}({#1})}
\newcommand{\dng}[1]{{{\sim}#1}}
\newcommand{\head}[1]{\mathsf{h}({#1})}
\newcommand{\pbody}[1]{\mathsf{b}^+({#1})}
\newcommand{\nbody}[1]{\mathsf{b}^-({#1})}
\newcommand{\at}[1]{\mathsf{at}({#1})}
\newcommand{\as}[1]{\mathsf{AS}({#1})}
\newcommand{\acfirstfix}[1]{\text{C-FIX-1}}
\newcommand{\acfirstmts}[1]{\text{C-MTS-1}}
\newcommand{\acsecondfix}[1]{\text{C-FIX-2}}
\newcommand{\acsecondmts}[1]{\text{C-MTS-2}}
\newcommand{\acsecondfixp}[1]{\text{P-FIX-2}}
\newcommand{\acsecondmtsp}[1]{\text{P-MTS-2}}
\newcommand{\acthirdfix}[1]{\text{C-FIX-3}}
\newcommand{\acthirdmts}[1]{\text{C-MTS-3}}
\newcommand{\twod}{\ensuremath{\mathbb{B}}}
\newcommand{\threed}{\ensuremath{\mathbb{B}_{\star}}}
\newcommand{\fasp}{\ensuremath{\mathtt{fASP}}\xspace}
\newcommand{\tsconj}{\ensuremath{\mathtt{tsconj}}\xspace}
\newcommand{\bn}{\ensuremath{f}} %
\newcommand{\phen}{\ensuremath{\beta}}
\newcommand{\perm}{\ensuremath{\sigma}}
\newcommand{\pert}{\ensuremath{\mathcal{X}}}
\newcommand{\proj}{\ensuremath{\Delta}}
\newcommand{\aproj}{\ensuremath{\Omega}}
\newcommand{\phenasp}{\ensuremath{Q}}
\newcommand{\pos}[1]{\ensuremath{\mathsf{p}}(#1)}
\newcommand{\ngt}[1]{\ensuremath{\mathsf{n}}(#1)}
\newcommand{\tsconjp}[1]{\ensuremath{\mathsf{P}\textsf{-}\mathsf{tsconj}}(#1)}
\newcommand{\faspp}[1]{\ensuremath{\mathsf{P}\textsf{-}\mathsf{fASP}}(#1)}
\newcommand{\aux}[1]{\ensuremath{\mathsf{aux}_{#1}}}
\newcommand{\nnf}[1]{\ensuremath{\mathsf{NNF}}(#1)}
\newcommand{\toaspalgname}[1]{\ensuremath{\mathsf{ToASP}(#1)}}
\newcommand{\projaspcount}[1]{\ensuremath{\#\mathsf{PASP}(#1)}}
\algnewcommand\algorithmicforeach{\textbf{for each}}
\title{Scalable Counting of Minimal Trap Spaces and Fixed Points in Boolean Networks} %
\author{Mohimenul Kabir\footnote{Corresponding author}}{School of Computing, National University of Singapore, Singapore}{e0546090@u.nus.edu}{https://orcid.org/0000-0001-7551-0337}{}
\author{Van-Giang Trinh}{Inria Saclay, EP Lifeware, Palaiseau, France}{van-giang.trinh@inria.fr}{https://orcid.org/0000-0001-6581-998X}{}%
\author{Samuel Pastva}{Faculty of Informatics, Masaryk University, Botanicka 68a, Brno, 60200, Czechia}{xpastva@fi.muni.cz}{https://orcid.org/0000-0003-1993-0331}{}
\author{Kuldeep S Meel}{School of Computer Science, Georgia Institute of Technology, Atlanta, USA}{meel@cs.toronto.edu}{https://orcid.org/0000-0001-9423-5270}{}
\authorrunning{Kabir et al.} %
\keywords{Computational systems biology, Boolean network, Fixed point, Trap space, Answer set counting, Projected counting, Abstract argumentation, Logic programming} %
\begin{document}

\maketitle

\begin{abstract}
	Boolean Networks (BNs) serve as a fundamental modeling framework for capturing complex dynamical systems across various domains, including systems biology, computational logic, and artificial intelligence. 
	A crucial property of BNs is the presence of trap spaces --- subspaces of the state space that, once entered, cannot be exited. 
	Minimal trap spaces, in particular, play a significant role in analyzing the long-term behavior of BNs, making their efficient enumeration and counting essential.
	The fixed points in BNs are a special case of minimal trap spaces.
	In this work, we formulate several meaningful counting problems related to minimal trap spaces and fixed points in BNs.
	These problems provide valuable insights both within BN theory (e.g., in probabilistic reasoning and dynamical analysis) and in broader application areas, including systems biology, abstract argumentation, and logic programming. 
	To address these computational challenges, we propose novel methods based on {\em approximate answer set counting}, leveraging techniques from answer set programming.
	Our approach efficiently approximates the number of minimal trap spaces and the number of fixed points without requiring exhaustive enumeration, making it particularly well-suited for large-scale BNs.
	Our experimental evaluation on an extensive and diverse set of benchmark instances shows that our methods significantly improve the feasibility of counting minimal trap spaces and fixed points, paving the way for new applications in BN analysis and beyond.
\end{abstract}

\section{Introduction}
\label{sec:introduction}

Boolean Networks (BNs) serve as a fundamental modeling framework for representing complex dynamical systems across domains such as systems biology, computational logic, and artificial intelligence~\cite{DWM2024,HMJ2024,Rozum2021,SKIKK2020,TBSF2024}.
Their ability to capture intricate interactions and complex system behaviors makes them a valuable tool for studying diverse phenomena such as gene regulatory networks, logical circuits, and reasoning processes~\cite{ASDO2024,Rozum2021}.
A crucial aspect of BN dynamics is the existence of trap spaces—subspaces of the state space that, once entered, cannot be exited~\cite{HAH2015}.
These structures are instrumental in understanding the long-term behavior of BNs, as they often correspond to stable system configurations or attractors~\cite{HAH2015}.

Among trap spaces, minimal trap spaces (fixed points are a special case of minimal trap spaces in which all variables are {\em fixed}~\cite{HAH2015}) are of particular interest due to their role in characterizing the dynamical landscape of a BN~\cite{HAH2015,TPPR2024}.
The identification and counting of minimal trap spaces provide valuable insights into the system stability, attractor structure, and probabilistic behavior~\cite{HAH2015,CS2015}.
However, enumerating these structures poses a significant challenge due to their computational complexity, especially for large-scale BNs encountered in practical applications~\cite{TBPS2024}.
Scalable methods for counting minimal trap spaces are therefore essential for advancing the BN analysis.

Unfortunately, minimal trap spaces have received little attention in the context of counting problems for Boolean networks.
In contrast, the problem of counting fixed points has been studied in several works~\cite{PG2005,Predrag2006,CS2015,FAKA2022}.
These works show that fixed point counting is generally \#P-complete and examine its complexity under structural restrictions~\cite{PG2005,Predrag2006} or specific classes of Boolean update functions~\cite{CS2015}.
Some studies also consider scenarios in which the update logic is only partially known~\cite{FAKA2022}.
On the practical side, while no dedicated implementation exists for fixed point counting, the task can be reduced to propositional model counting~\cite{SRSM19} via a CNF encoding of fixed points~\cite{EM2011}.
To the best of our knowledge, no prior work---either theoretical or practical---has addressed the problem of counting minimal trap spaces in Boolean networks.

Our paper addresses the research gap by formulating six meaningful problems related to counting minimal trap spaces and fixed points in BNs.
These counting problems capture core tasks such as counting all minimal trap spaces or fixed points, counting those that satisfy a given {\em property} (e.g., a {\em phenotype}), and counting solutions under {\em perturbations}.
These problems provide valuable insights not only within BN theory (e.g., probabilistic reasoning and dynamical analysis) but also in broader application areas such as abstract argumentation and logic programming (discussed in~\Cref{sec:problem-formulation}).
We subsequently propose novel and efficient methods to solve the counting problems by exploiting the expressive power of Answer Set Programming (ASP)~\cite{MT1999}.
ASP is a declarative problem solving paradigm and has widely been applied in the field of systems biology~\cite{ST2009,VGETGNSSS2015}, in particular in the analysis and control of BNs~\cite{AFRM2017,KSSV2013,HAH2015,Paulev2020,TBPS2024,TBS2023} (see~\Cref{sec:related-work}).
As in existing ASP counting literature~\cite{KM2023}, our ASP-based reduction leverages the approximate answer set counter ApproxASP~\cite{KESHFM22}, which employs a {\em hashing-based} technique for approximate answer set counting.
Finally, we conduct an extensive experimental evaluation on a diverse benchmark dataset.
Our analysis shows that ApproxASP efficiently estimates the number of minimal trap spaces and fixed points, and ApproxMC~\cite{YM2023} efficiently estimates fixed point counts. 
Both approaches avoid exhaustive enumeration, significantly improving the feasibility of counting compared to enumeration and BDD-based methods used in other tools.

The remainder of the paper is structured as follows. 
In~\Cref{sec:preliminaries}, we review the necessary background on propositional logic, BNs, answer set programming, and model counting.
\Cref{sec:related-work} surveys related work.
\Cref{sec:problem-formulation} defines the six counting problems we consider and discusses their applications.
\Cref{sec:computation-methods} introduces our ASP-based reduction methods for solving these problems. 
\Cref{sec:experimental-results} presents experimental results, demonstrating the effectiveness of our methods. 
Finally, \Cref{sec:conclusion} concludes the paper and outlines potential directions for future research.

\section{Preliminaries}\label{sec:preliminaries}
In this section, we present background and some notations from propositional logic, Boolean networks, and answer set programming.

\subsection{Propositional Formulas}

In this work, we employ \(\twod = \{0, 1\}\) as the Boolean domain and \(\threed = \{0, 1, \star\}\) as the three-valued domain.
Under propositional semantics, each {\em propositional variable} $v$ takes its value from $\twod$. 
A propositional formula is defined recursively: the basic formulas include the Boolean constants $0$ and $1$, any variable $v$, and its negation $\neg{v}$. 
Moreover, if a formula is prefixed by $\neg$ (not) or if two formulas are connected by a logical connective --- \(\land\) (conjunction), \(\lor\) (disjunction), and \(\leftrightarrow\) (bi-implication) --- the resulting expression is also a formula.
A formula is in Conjunctive Normal Form (CNF) if it is expressed as a conjunction of disjunctions of literals (variables or their negations), and it is in Negation Normal Form (NNF) if all negations are applied directly to propositional variables.
Any propositional formula can be converted into a semantically equivalent CNF or NNF by recursively applying specific rewriting rules for its logical connectives~\cite{BHV2009,MSS2023}.

\subsection{Boolean Networks}\label{subsec:pre-bn}

A Boolean Network (BN) \(\bn\) is defined as a finite set of Boolean functions over a finite set of Boolean variables, denoted by \(\var{\bn}\).
Each variable \(v \in \var{\bn}\) is associated with a Boolean function \(f_v \colon \twod^{|\var{\bn}|} \to \twod\).
A function \(f_v\) is termed \emph{constant} if it is always either 0 or 1 regardless of the values of its arguments.
A variable \(v\) is considered a \emph{source variable} if \(f_v\) is the identity function on \(v\), i.e., \(f_v = v\).
A state \(s\) of \(f\) is a Boolean vector \(s \in \twod^{|\var{\bn}|}\) that can be viewed as a mapping: \(s \colon \var{\bn} \to \twod\);
we denote the value of variable $v$ in state $s$ by \(s_v\).
For convenience, a state is often represented as a string of values (e.g., ``0110'' instead of (0, 1, 1, 0)).

At each discrete time step \(t\), each variable \(v\) can update its state  according to its Boolean function $f_v$; that is, $v$'s state at time $t+1$ is given by \(s'_v = f_v(s)\).
An \emph{update scheme} specifies how these state updates occur over time~\cite{SKIKK2020}.
The two primary schemes are synchronous, in which all variables update simultaneously, and fully asynchronous, where a single variable is chosen non-deterministically to update.
Under arbitrary update scheme, the BN transitions from one state to another --- a process known as a {\em state transition}. The overall dynamics of the BN are captured by the {\em State Transition Graph} (STG), a directed graph whose nodes represent states and edges represent transitions.
We denote the STG under the synchronous update scheme as  \(\stg{\bn}\) and that under the fully asynchronous scheme as \(\atg{\bn}\).

A non-empty set \(A\) of states is a \emph{trap set} if there is no transition from a state in $A$ to a state outside $A$ in the State Transition Graph (STG) of $\bn$ (i.e., there is no pair $x \in A$ and $y \not \in A$ such that \((x, y)\) is an arc in the STG)~\cite{HAH2015}.
A trap set that is minimal with respect to set inclusion is termed an {\em attractor}.
In particular, an attractor containing a single state is called a {\em fixed point}, while one with two or more states is referred to as a {\em cyclic} attractor.
A \emph{sub-space} \(m\) of a BN \(\bn\) is a mapping \(m \colon \var{\bn} \to \threed\).
A variable \(v \in \var{\bn}\) is said to be \emph{fixed} (resp.\ \emph{free}) in \(m\) if \(m(v) \neq \star\) (resp.\ \(m(v) = \star\)).
For convenience, a sub-space is often represented as a string of values  (e.g., \(0\star\) instead of \(\{v_1 = 0, v_2 = \star\}\)).
The sub-space \(m\) represents a set of states, denoted by \(\mathcal{S}[m]\), defined as 
\[
\mathcal{S}[m] = \{s \in \mathbb{B}^{|\var{\bn}|} \mid s_v = m(v), \forall v \in \var{\bn}, m(v) \neq \star\}
\]
For example, if \(m = \star11\), then \(\mathcal{S}[m] = \{011, 111\}\).
If a sub-space is also a trap set, it is a \emph{trap space}.
Unlike trap sets and attractors, trap spaces are independent of the update scheme employed~\cite{HAH2015}.
Notably, a fixed point of \(\bn\) is a special trap space in which all variables are fixed.
A trap space \(m\) is \emph{minimal} if there is no trap space \(m'\) such that \(\mathcal{S}[m'] \subset \mathcal{S}[m]\).
Since an attractor is a subset-minimal trap set, a minimal trap space contains at least one attractor of the BN, regardless of the update scheme employed~\cite{HAH2015}.

\begin{example}\label{exam:straight-BN}
	Let us consider BN \(\bn\) with \(\var{\bn} = \{a, b\}\), \(f_a = a \land \neg b\), and \(f_b = a\).
	The synchronous STG of \(\bn\) is shown in Figure~\ref{fig:sstg-straight-BN-wildtype-knockout}a.
	The set \(\{00, 01, 11\}\) is a trap set but not a trap space.
	It is easy to check that \(\bn\) has three trap spaces: \(m_1 = 00\), \(m_2 = 0\star\), and \(m_3 = \star\star\).
	Among these, \(m_1\) is a minimal trap space (also a fixed point) of \(\bn\). 
	In this case, $m_1$ is also the only synchronous attractor of $\bn$.
\end{example}

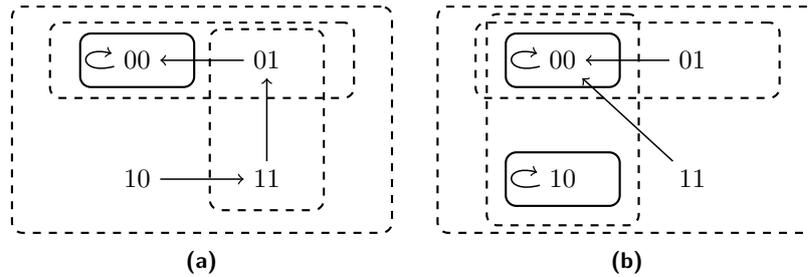
\begin{figure}[!ht]
	\centering
	\captionsetup[subfigure]{justification=centering}
	\begin{subfigure}{0.4\textwidth}
		\centering
		\begin{tikzpicture}[node distance=0.8cm and 0.8cm, every node/.style={scale=1.0}, line width = 0.2mm]
			\node[] (0) [] {00};
			\node[] (1) [right=of 0,xshift=0.3cm] {01};
			\node[] (2) [below=of 0,yshift=-0.3cm] {10};
			\node[] (3) [right=of 2,xshift=0.3cm] {11};
			
			\draw[->] (0) edge [loop left] (0);
			\draw[->] (1) edge [] (0);
			\draw[->] (2) edge [] (3);
			\draw[->] (3) edge [] (1);
			
			\node[draw=black, thick, rounded corners, fit=(0), minimum width=1.5cm, minimum height=0.5cm] {};
			\node[draw=black, dashed, thick, rounded corners, fit=(0)(1), minimum width=4cm, minimum height=1cm] {};
			
			\node[draw=black, dashed, thick, rounded corners, fit=(1)(3), minimum width=1.5cm, minimum height=2.4cm] {};
			\node[draw=black, dashed, thick, rounded corners, fit=(0)(1)(2)(3), minimum width=5cm, minimum height=3cm] {};
		\end{tikzpicture}
		\caption{}
	\end{subfigure}\begin{subfigure}{0.4\textwidth}
		\centering
		\begin{tikzpicture}[node distance=0.8cm and 0.8cm, every node/.style={scale=1.0}, line width = 0.2mm]
			\node[] (0) [] {00};
			\node[] (1) [right=of 0,xshift=0.3cm] {01};
			\node[] (2) [below=of 0,yshift=-0.3cm] {10};
			\node[] (3) [right=of 2,xshift=0.3cm] {11};
			
			\draw[->] (0) edge [loop left] (0);
			\draw[->] (1) edge [] (0);
			\draw[->] (2) edge [loop left] (2);
			\draw[->] (3) edge [] (0);
			
			\node[draw=black, thick, rounded corners, fit=(0), minimum width=1.5cm, minimum height=0.5cm] {};
			\node[draw=black, dashed, thick, rounded corners, fit=(0)(1), minimum width=4cm, minimum height=1cm] {};
			\node[draw=black, thick, rounded corners, fit=(2), minimum width=1.5cm, minimum height=0.5cm] {};
			\node[draw=black, dashed, thick, rounded corners, fit=(0)(2), minimum width=2cm, minimum height=2.8cm] {};
			\node[draw=black, dashed, thick, rounded corners, fit=(0)(1)(2)(3), minimum width=5cm, minimum height=3cm] {};
		\end{tikzpicture}
		\caption{}
	\end{subfigure}	
	\caption{(a) Synchronous STG $\stg{\bn}$ of BN $\bn$ from Example~\ref{exam:straight-BN}. (b)~Synchronous STG \(\stg{\bn}\) where variable $b$ is subject to a {\em knockout} (i.e., its value is forced to $0$). Trap spaces (resp.\ minimal trap spaces) are enclosed by dashed (resp.\ solid) rectangular frames.}%
	\label{fig:sstg-straight-BN-wildtype-knockout}
\end{figure}

\subsection{Answer Set Programming}

An Answer Set Program (ASP) $P$ expresses logical constraints
between a set of propositional variables.
In answer set programming, such variables are also called \emph{atoms}, and the
set of atoms appearing in $P$ is denoted by $\at{P}$.  
For notational convenience, we will use the terms ``variable'' and ``atom'' interchangeably throughout the paper.

\noindent An ASP is a finite set of \emph{rules} of the form 
 \[
 a_1 \lor \dots \lor a_k \leftarrow b_1, \dots, b_m, \dng{c_1}, \dots, \dng{c_n}
 \]
 where \(k, m, n \geq 0\) and $\dng{}$ denotes \textit{default negation}~\cite{clark1978}.
 Given a rule \(r\) of the above form, we define \(\head{r} = \{a_1, \dots, a_n\}\), \(\pbody{r} = \{b_1, \dots, b_m\}\), and \(\nbody{r} = \{c_1, \dots, c_n\}\) as the \emph{head}, \emph{positive body}, and \emph{negative body} of \(r\), respectively.
 The rule \(r\) is a \emph{fact} if \(\pbody{r} = \nbody{r} = \emptyset\) and an (integrity) \emph{constraint} if \(\head{r} = \emptyset\).
 We often use the notations $\top$ and $\bot$ to denote empty body ($\pbody{r} \cup \nbody{r} = \emptyset$) and empty head ($\head{r} = \emptyset$), respectively.
 The rule \(r\) is called \emph{normal} if \(|\head{r}| \leq 1\), and the ASP program is normal (or a normal logic program) if all its rules are normal.
 A program is called {\em disjunctive} if there is a rule $r \in P$ such that $|\head{r}| > 1$~\cite{BD1994}.

\paragraph{Answer Set Semantics.} A subset \(M\) of atoms (called an \emph{interpretation}) \emph{satisfies} a rule \(r\) if \((\head{r} \cup \nbody{r}) \cap M \neq \emptyset\) or \(\pbody{r} \setminus M \neq \emptyset\).
The interpretation \(M\) is a \emph{supported model} of \(P\) if it satisfies all rules of \(P\), denoted as \(M \models P\).
The \emph{Gelfond-Lifschitz (GL) reduct} of the program \(P\) with respect to the interpretation \(M\) is defined as follows: \(P^{M} := \{\head{r} \leftarrow \pbody{r} | r \in P, M \cap \nbody{r} = \emptyset\}\)~\cite{gelfond1988stable}.
The interpretation \(M\) is an \emph{answer set} (or a \emph{stable model}) of \(P\) if $\not \exists M\textprime \subsetneq M$ such that $M\textprime \models P^M$.
We use the notation \(\as{P}\) to denote the set of answer sets of \(P\).

\paragraph{Answer Set Counting.}\label{subsec:pre-answer-set-counting}
Given an ASP \(P\), the {\em answer set counting} problem (denoted as \#ASP) seeks to compute the number of answer sets of \(P\), written as $\Card{\as{P}}$.
The projected answer set counting problem (denoted as \#PASP) extends \#ASP by counting the number of distinct answer sets of $P$ with respect to a given set of \emph{projection atoms} \(I \subseteq \at{P}\).
Two answer sets are considered equivalent if they differ only on atoms in $\at{P} \setminus I$~\cite{FH2019}.
We denote the set of projected answer sets as $\as{P, I}$, so that \#PASP computes \(|\{M \cap I \mid M \in \as{P}\}|\).
As a special case, when \(I = \at{P}\), \#PASP reduces to \#ASP.
We use the notation $\projaspcount{P, I}$ to denote the {\em projected answer set count} of program $P$ w.r.t. projection atoms $I$.

In the probably approximately correct (PAC) framework for answer set counting, given a program $P$ and a projection set $I$, the goal is to estimate a count $\cnt$ satisfying
\begin{equation*}
	\text{Pr}\left[\frac{\Card{\as{P, I}}}{(1 + \varepsilon)} \leq \cnt \leq (1 + \varepsilon) \times \Card{\as{P, I}}\right] \geq 1 - \delta
\end{equation*}
where $0 < \varepsilon < 1$ is the {\em tolerance} and $0 < \delta < 1$ is the {\em confidence}~\cite{KESHFM22,CMV2016}.

\section{Related Work}\label{sec:related-work}

\paragraph{ASP-Based Computation of Fixed Points and Minimal Trap Spaces.}
Several ASP encodings have been proposed to characterize fixed points and minimal trap spaces in BNs.
In many cases, an ASP encoding for fixed points is derived from its minimal trap space counterpart by adding {\em integrity constraints} to capture the specific properties of fixed points~\cite{VML2024,Trinh2023}.
The first ASP encoding that requires the computation of {\em prime implicants} for each Boolean function was proposed and implemented in~\cite{HAH2015}.
A major bottleneck of this encoding is that computing even a single prime implicant is NP-hard and the total number of prime implicants can be exponential in the number of function inputs~\cite{CM78}.
Subsequent encodings~\cite{Paulev2020,VML2024,Trinh2023} that were proposed to overcome this bottleneck still suffer from scalability and efficiency issues, particularly for very large and complex models, primarily because they require the {\em disjunctive normal forms} of all the Boolean functions of the original BN.
For fixed points, the ASP encoding by Trinh et al.~\cite{TBS2023} (called \fasp) uses a {\em negation normal form} for each Boolean function, which is much more efficient to obtain.
This encoding was later generalized for minimal trap spaces~\cite{TBPS2024} (called \tsconj); however, when dealing with {\em unsafe formulas} that might yield incorrect solutions, it still requires a disjunctive normal form to ensure the correctness.

\paragraph{BN Encoding of Normal Logic Programs.} The theoretical work by Inoue~\cite{Inoue2011} was among the first to establish a connection between ASP and BNs.
It defines a BN encoding for finite ground normal logic programs, which relies on the notion of the Clark’s completion~\cite{clark1978}, and points out that the two-valued models of the Clark's completion of a finite ground normal logic program one-to-one correspond to the fixed points of the encoded BN.
The subsequent work~\cite{IS2012} points out that the strict supported classes of a finite ground normal logic program one-to-one correspond to the synchronous attractors of the encoded BN.
Very recently, Trinh et al.~\cite{TBSF2024} related the regular models in a finite ground normal logic program and the minimal trap spaces in the respective BN, and further applied these theoretical results to explore graphical conditions for the existence, uniqueness, and number of regular models.

\paragraph{SAT Characterization of Fixed Points.} The set of fixed points of a BN \(\bn\) can be characterized as the set of {\em satisfying assignments} of the propositional formula \(\bigwedge_{v \in \var{\bn}}\left(v \leftrightarrow f_v\right )\)~\cite{EM2011}.
Hence, we can apply \#SAT tools~\cite{Thurley2006,CSV2013,SRSM19} to counting the number of fixed points of a BN.
To the best of our knowledge, to date, there is no SAT characterization for the set of minimal trap spaces of a BN.

\paragraph{Answer Set Counting.} For general ASP programs, \#ASP is \(\#\cdot\text{coNP-complete}\)~\cite{FHMW2017},
while \#ASP is \(\#\cdot\text{P-complete}\) for normal ASP programs, which follows from standard reductions~\cite{Janhunen2006}.
The projected answer set counting \#PASP is simply \(\Sigma_2^p\)-complete if the set of projection atoms is empty; otherwise, the complexity is \(\#\cdot\Sigma_2^p\)-complete~\cite{FH2019}.
Fichte et al.~\cite{FH2019,FHMW2017} exploited the {\em tree decomposition}-based technique for counting answer sets.
Kabir et al.~\cite{KESHFM22} introduced the hashing-based approximate counting technique for answer set counting. %

\section{Problem Formulation}\label{sec:problem-formulation}

In this section, we introduce several counting problems related to minimal trap spaces in Boolean networks (BNs), covering both minimal trap spaces and fixed points.
We begin with a straightforward counting variant, 
then propose a specialized variant that requires solutions to satisfy a specific {\em property}, 
and finally present a more complex variant focused on phenotype measurement in BNs under perturbations. To highlight the biological utility of these theoretical problems, a brief case study is also given in Appendix~\ref{sec:case-study}.

\subsection{Counting Minimal Trap Spaces and Fixed Points}\label{subsec:formulation-first}

We introduce two fundamental counting problems for Boolean networks (BNs): one that counts the number of minimal trap spaces (\Cref{def:problem-first-MTS}) and another one counts the number of fixed points (\Cref{def:problem-first-FIX}). 
These problems address the basic question: How many minimal trap spaces or fixed points does a given BN have?

\begin{definition}[\acfirstmts{}]\label{def:problem-first-MTS}
	Given a BN \(\bn\), compute the number of minimal trap spaces of \(\bn\).
\end{definition}

\begin{definition}[\acfirstfix{}]\label{def:problem-first-FIX}
	Given a BN $\bn$, compute the number of fixed points of $\bn$.
\end{definition}

In BN research, both counting problems --- \acfirstmts{} and \acfirstfix{} --- are valuable when full enumeration is infeasible, as in gene regulatory network models with many source variables~\cite{SVATSAJ2020,TBPS2024,TBS2023}.
They are also useful when divergent solutions are sought~\cite{SVLAL2020}. 
Notably, the fixed point counting problem (\acfirstfix{}) can enhance methods for enumerating asynchronous attractors by enabling the selection of a smaller candidate set, thereby potentially speeding up the filtering process~\cite{TKB2022}.
Finally, both \acfirstmts{} and \acfirstfix{} can lay the groundwork for probabilistic reasoning in BNs~\cite{SDZ2002}.

Recent research has connected Boolean networks (BNs) with two broad fields: abstract argumentation and logic programming.
Although we omit the preliminaries on Abstract Argumentation Frameworks (AFs)~\cite{DFGH2022}, Abstract Dialectical Frameworks (ADFs)~\cite{LMNWW22}, and Normal Logic Programs (NLPs)~\cite{TBSF2024}, interested readers can consult the cited papers for details.
Specifically, Trinh et al.~\cite{TBV2025} and Dimopoulos et al.~\cite{DWM2024}  demonstrate that the {\em preferred} (resp.\ {\em stable}) extensions of an AF one-to-one correspond to the minimal trap spaces (resp.\ fixed points) of the respective BN.
Similarly,  Heyninck et al. \cite{HMJ2024} and Azpeitia et al.~\cite{ASDO2024} demonstrate that ADFs and BNs are identical --- in this context, the {\em preferred} (resp.\ {\em two-valued}) interpretations of an ADF match the minimal trap spaces (resp.\ fixed points) of the respective BN.
In the realm of logic programming, the subset-minimal {\em supported partial} models (resp.\ {\em supported} models) of a {\em finite ground} NLP correspond one-to-one with the minimal trap spaces (resp.\ fixed points) of the respective BN~\cite{Inoue2011,TBSF2024}.
Moreover, if the finite ground NLP is {\em tight}~\cite{Fages1994,LL03}, then its regular models (resp.\ stable models) one-to-one correspond to the minimal trap spaces (resp.\ fixed points) of the respective BN.
These types of extensions, interpretations, and models are central to the study of AFs, ADFs, and finite ground NLPs~\cite{FHM2024,JNSSY2006,LMNWW22}.

Recent work on AFs has focused on counting stable and preferred extensions, yielding both complexity results~\cite{FHM2024} and dynamic programming methods~\cite{DFGH2022} (note that the dynamic programming methods do not support preferred extensions).
In contrast, counting problems for ADFs and finite ground NLPs remain largely unexplored, aside from a few studies on answer sets~\cite{DBLP:conf/aaai/AzizCMS15,KESHFM22}.
Thanks to the connections between BNs and these formalisms, these results from \acfirstmts{} and \acfirstfix{} can be extended to ADFs --- where they correspond to preferred and two-valued interpretations --- as well as to general finite ground NLPs (for supported partial and supported models) and tight finite ground NLPs (for regular and stable models).

\subsection{Counting with Satisfying Properties}\label{subsec:formulation-second}

We examine a specialized variant of problems \acfirstmts{} and \acfirstfix{}, focusing on counting minimal trap spaces (Definition~\ref{def:problem-second-MTS}) and fixed points (Definition~\ref{def:problem-second-FIX}) that satisfy a specified property. 
This formulation addresses the natural question: How many minimal trap spaces (or fixed points) in a BN exhibit a given property or assumption?

In systems biology, BNs are used to model biological phenotypes, which reflect an organism's functional characteristics~\cite{LQADZXHE2016}.
Several definitions of phenotype in BNs have been proposed~\cite{BBPSS2023,KHNS2018,KTFJCS2021}; in this work, we adopt one of the most widely used notions.
Given a BN $f$, we define {\em trait} as a statement of the form $(v \leftrightarrow e)$, where $v \in \var{f}$ and $e \in \threed$.
Note that \(v \leftrightarrow \star\) is evaluated true if and ony if \(v = \star\).
A phenotype $\phen$ is then defined as the conjunction of a set of traits.
A sub-space $m$ satisfies a phenotype $\phen$ (denoted by \(m \models \phen\)) if, upon replacing each variable $v \in \phen$ with its value $m(v)$, the resulting formula evaluates to true under propositional semantics.
Unlike minimal trap spaces, fixed points require that all variables take Boolean values ($e \in \twod$), and hence phenotypes involving `$\star$' are not meaningful in this context.

In our problem formulation, the property of interest is a desirable phenotype.
In systems biology, a minimal trap space satisfying a phenotype suggests the phenotype's potential emergence in vivo.
Furthermore, this counting variant is applicable beyond systems biology. 
For instance, in abstract argumentation frameworks (resp.\ normal logic programs), a phenotype may represent the presence of particular set of arguments (resp.\ atoms) in an extension (resp.\ a model)~\cite{FHM2024,JNSSY2006}, a concept closely linked to {\em credulous reasoning}.

\begin{definition}[\acsecondmts{}]\label{def:problem-second-MTS}
	Given a BN \(\bn\) and a phenotype $\phen$, compute the number of minimal trap spaces of \(\bn\) that satisfy $\phen$.
\end{definition}

\begin{definition}[\acsecondfix{}]\label{def:problem-second-FIX}
	Given a BN \(\bn\) and a phenotype $\phen$, compute the number of fixed points of \(\bn\) that satisfy $\phen$.
\end{definition}

When the specified phenotype is a {\em tautology}, \acsecondmts{} (resp. \acsecondfix{}) reduces directly to \acfirstmts{} (resp. \acfirstfix{}).
Thus, all implications discussed for \acfirstmts{} and \acfirstfix{} also apply to \acsecondmts{} and \acsecondfix{}, respectively.
In the following, we examine these applications through the lens of abstract argumentation frameworks.

First, \acsecondmts{} and \acsecondfix{} facilitate more nuanced reasoning between {\em skeptical} and {\em credulous} approaches by integrating quantitative and probabilistic methods~\cite{FHN2022}. 
For instance, by running \acsecondmts{} (or \acsecondfix{}) twice --- once with a specific set of arguments and once with a tautology --- we can gain insights into preferred (or stable) extensions~\cite{FHM2024}.
Second, this approach shifts reasoning from simple decision-making (i.e., determining whether a set of arguments is present in an extension) to probabilistic reasoning (i.e., assessing the likelihood of a set of arguments appears in an extension). 
Moreover, this framework supports the development of advanced probabilistic semantics for abstract argumentation frameworks~\cite{KBDDGH2022}.
To our knowledge, these aspects have not yet been explored in the context of BNs, and our work paves the way for integrating such reasoning into BN analysis.

\begin{example}\label{exam:four-first-problems}
	Consider again the BN \(\bn\) shown in~\Cref{exam:straight-BN}.
	It has a unique minimal trap space \(m_1 = 00\), which is also its only fixed point of \(\bn\).
	Hence, the answer to \acfirstmts{} (resp.\ \acfirstfix{}) is $1$.
	Considering the phenotype \(\phen = (b \leftrightarrow \star)\), the answer to \acsecondmts{} is $0$.
\end{example}

\subsection{Counting Under Perturbations and Measuring Robustness}\label{subsec:formulation-third}
We consider a main contribution of this paper to be the identification of new, relevant counting problems (other than \acsecondmts{} and \acsecondfix{}, which were previously known): \acthirdmts{} and \acthirdfix{}.
To formalize these problems, we introduce the concept of a {\em perturbation}.

Consider a BN \(\bn\).
A \emph{perturbation} $\sigma$~\cite{su2020sequential} on a set \(\pert \subseteq \var{f}\) of {\em perturbable} variables is defined as a mapping from \(\pert \to \mathbb{B}_{\star}\).
In practice, $\pert$ can be any subset of variables whose perturbation is biologically meaningful.
Since each variable in $\pert$ can assume one of three values under perturbation, there are \(3^{|\pert|}\) possible perturbations.
For each \(v \in \pert\), setting \(\sigma(v) = 0\) forces \(f_v = 0\) ({\em knockout perturbation}), setting \(\sigma(v) = 1\) forces \(f_v = 1\) ({\em over-expression perturbation}), and setting \(\sigma(v) = \star\) leaves \(f_v\) unchanged.
Consequently, the perturbed BN, denoted \(\bn^{\sigma}\), is defined by \(\var{\bn^{\sigma}} = \var{\bn}\) and, for every \(v \in \var{\bn^{\sigma}}\), 

$$
f^{\sigma}_v =\begin{cases}
		 \sigma(v) & \text{if \(v \in \pert\) and \(\sigma(v) \neq \star\)}\\
         f_v & \text{otherwise}
		 \end{cases}
$$Note that, the value \(\star\) is used to distinguish imperturbable variables and perturbable variables that are unchanged under a certain perturbation.

In biological systems, a perturbation refers to any disturbance that disrupts the normal functioning of a BN. 
Such disturbances may arise from {\em genetic mutations}~\cite{shmulevich2002gene}, external factors such as {\em medications}~\cite{bloomingdale2018boolean}, or other influences~\cite{montagud2022patient}. 
These perturbations can substantially alter the phenotypes exhibited by a BN, making it crucial to quantify their effects on network behavior. 
In systems biology, this impact is commonly assessed in terms of {\em robustness}~\cite{kitano2007towards}, which motivates our definitions of problems \acthirdmts{} and \acthirdfix{}.

\begin{definition}[\acthirdmts{}]\label{def:problem-third-MTS}
	Given a BN \(\bn\), a set of perturbable variables \(\pert\), and a target phenotype \(\phen\),
	determine the number of perturbations \(\sigma\) on \(\pert\) such that the perturbed BN \(\bn^{\sigma}\) exhibits at least one minimal trap space that satisfies \(\phen\).
\end{definition}

\begin{definition}[\acthirdfix{}]\label{def:problem-third-FIX}
	Given a BN \(\bn\), a set of perturbable variables \(\pert\), and a target phenotype \(\phen\),
	determine the number of perturbations \(\sigma\) on \(\pert\) such that the perturbed BN \(\bn^{\sigma}\) exhibits at least one fixed point that satisfies \(\phen\).
\end{definition}

Leveraging the results of \acthirdmts{} (or \acthirdfix{} if focusing solely on fixed points), we define the robustness of a phenotype as the fraction of perturbations that preserve the phenotype relative to the total number of perturbations ($3^{\Card{\pert}}$). 
In other words, robustness measures the probability that a phenotype remains active following a random, admissible perturbation is applied to the network. 
This measure of phenotype robustness can inform the selection of specific perturbations and guide targeted treatment strategies~\cite{BBPSS2023,TP2024}. A small case study on an interferon model with 121 variables presents these concepts more practically in Appendix~\ref{sec:case-study}.

\begin{example}\label{exam:third-MTS}
	Consider BN \(\bn\) given in~\Cref{exam:straight-BN}.
	Consider $\pert = \{b\}$ denote the set of perturbable variables and define the set of desirable phenotype as \(\phen = (a \leftrightarrow 0 \land b \leftrightarrow 0)\).
	There are three possible perturbations: \(\sigma_1 = \{b = 0\}\), \(\sigma_2 = \{b = 1\}\), and \(\sigma_3 = \{b = \star\}\).
	The perturbation \(\sigma_1\) represents the knockout perturbation of variable \(b\) and \(\stg{f^{\sigma_1}}\) is given in~\Cref{fig:sstg-straight-BN-wildtype-knockout}b.
	It is straightforward to observe that \(\bn^{\sigma_1}\) has two minimal trap spaces $00$ and $10$.
	In constrast, the BN \(\bn^{\sigma_2}\) has one minimal trap space $01$, which does not satisfy the given phenotype.
	The BN \(\bn^{\sigma_3}\) equals \(\bn\) and has one minimal trap space $00$.
	Therefore, for BN $f$, phenotype $\phen$ and perturbable variables $\pert$, the answer to $\acthirdmts{}$ is $2$ 
	and the perturbation robustness of phenotype $\phen$ in BN \(\bn\) w.r.t. $\pert$ is \(\sfrac{2}{3}\).
\end{example}

\noindent\textbf{On the impact of precision} Finally, it should be noted that while these problems are defined \emph{exactly}, in practice their results mainly serve as a means of comparison. For example, the results of C-MTS-2 could be used to compare the abundance of two biological phenotypes. Then the exact count may not be important, as long as the two phenotypes can be compared reliably. As such, these problems are particularly suitable for approximate counting. This also impacts the choice of method parameters ($\epsilon$ and $\delta$), as in practice, even low precision (as used in our benchmarks) can be sufficient to distinguish between significantly different phenotypes. For closely matched results, the precision can be then increased as needed.

\section{Computational Methods}\label{sec:computation-methods}

Given a BN \(\bn\), our approach is to construct an ASP program \(P\) such that the answer sets of \(P\) one-to-one correspond to the minimal trap spaces (or fixed points) of \(\bn\).
This reduction allows us to leverage existing answer set counters to efficiently count the answer sets of $P$.
For \acfirstmts{} and \acfirstfix{}, we simply use the ASP encodings of \tsconj and \fasp, respectively.
For \acsecondmts{} and \acsecondfix{}, we complement the encoding of the given phenotype.
For \acthirdmts{} and \acthirdfix{}, we propose a new perturbation encoding, which we consider to be a main contribution of this paper.
Now, we begin by briefly reviewing the \tsconj and \fasp encodings.

\subsection{\tsconj and \fasp Encodings}

We first present the common components of the two encodings, followed by their differences.
The ASP encodings represent sub-spaces using atoms $\pos{v}$ and $\ngt{v}$, indicating whether variable $v$ is fixed to $1$, $0$, or left free. 
The goal is to translate BN trap space and fixed point properties into ASP rules whose answer sets correspond to these configurations.

Given a BN \(f\), the encodings compute an ASP program \(P\) as follows:
for each variable \(v \in \var{f}\), two atoms \(\pos{v}\) and \(\ngt{v}\) are introduced to indicate positive and negative assignments of the variable $v$, respectively.
Additionally, for every \(v \in \var{f}\), one rule of the form: \(\pos{v} \vee \ngt{v} \leftarrow \top\) is added to ensure that each answer set corresponds to a sub-space of $f$.
The translation from an answer set $M$ to a sub-space $m$ is defined as follows: for each \(v \in \var{f}\), we have (i) \(m(v) = 1\) if and only if \(\pos{v} \in M \land \ngt{v} \not \in M\), (ii) \(m(v) = 0\) if and only if \(\pos{v} \not \in M \land \ngt{v} \in M\), and (iii) \(m(v) = \star\) if and only if \(\pos{v} \in M \land \ngt{v} \in M\).
Recall that a trap space of \(f\) can be characterized by \(\bigwedge_{v \in \var{f}}(v \leftarrow f_v) \land (\neg v \leftarrow \neg f_v)\)~\cite{TBPS2024}.
For every \(v \in \var{f}\), two ASP rules --- (i) \(\gamma(v) \leftarrow \gamma(\nnf{f_v})\) and (ii) \(\gamma(\neg v) \leftarrow \gamma(\nnf{\neg f_v})\) --- are added to $P$ to express the characterization, where \(\nnf{\Phi}\) denotes a negation normal form of a Boolean formula \(\Phi\) and \(\gamma\) is a procedure defined as follows:
\begin{align*}
	\gamma(v) &= \pos{v}, \gamma(\neg v) = \ngt{v}, v \in \var{f}, \\
	\gamma(\bigwedge_{1\leq j\leq J}\alpha_{j}) &= \gamma(\alpha_{1}) \land \ldots \land \gamma(\alpha_{J}), \gamma(\bigvee_{1\leq j\leq J}\alpha_{j}) = \aux{k},
\end{align*}where \(\aux{k}\) is a new \emph{auxiliary} atom, \(k\) is a global counter starting from $1$ and shall be increased by $1$ after each new auxiliary atom is created, and for each \(j\), the rule \(\aux{k} \gets \gamma(\alpha_{j})\) is added to \(P\).

For the correctness of the \tsconj encoding, Trinh et al.~\cite{TBPS2024} defined a syntactic safeness condition for a Boolean formula $\Phi$: \(\Phi\) is considered safe if it does not contain any conjunction of two subformulas \(\Phi_1\) and \(\Phi_2\) such that there exists a variable \(x\) appearing in \(\Phi_1\) with \(\neg x\) appearing in \(\Phi_2\).
When both \(f_v\) and \(\neg f_v\) are safe for every \(v \in \var{f}\), then the set of answer sets of \(P\) one-to-one corresponds with the set of minimal trap spaces of \(f\).
When a Boolean formula \(\Phi\) (\(f_v\) or \(\neg f_v\)) is unsafe, the Disjunctive Normal Form (DNF) of $\Phi$ is used instead, which is always safe by definition.

There is no notion of ``safeness'' in the \fasp encoding. 
Rather an additional rule \(\bot \leftarrow \pos{v}, \ngt{v}\) is added to $P$, for every \(v \in \var{f}\) such that the variable \(v\) cannot take the value of \(\star\).
The answer sets of \(P\) one-to-one correspond to the fixed points of \(f\).

Following on, given a BN $f$, we use the notations $\tsconjp{f}$ and $\faspp{f}$ to denote the encoded ASP programs of \(f\), according to the \tsconj and \fasp~encodings, respectively.

\subsection{Methods for Problems \acfirstmts{} and \acfirstfix{}}\label{subsec:methods-first}

We make use of \tsconj and \fasp encodings for~\acfirstmts{} and \acfirstfix{}, respectively. 
The choice of encodings is due to following two reasons:
first, these encodings rely on less expensive representations --- specifically, negation normal forms (ref.\ \Cref{sec:related-work}).
Second, they establish a one-to-one correspondence between the minimal trap spaces (or fixed points) of the original BN and the answer sets of the encoded ASP program, whereas other encodings yield a one-to-one correspondence with the subset-minimal (or subset-maximal) answer sets~\cite{ADFPR2023}, which prevents the direct use of existing ASP counters.

\subsection{Methods for Problems \acsecondmts{} and \acsecondfix{}}\label{subsec:methods-second}

We add the encoding of phenotype to the encodings of \tsconj and \fasp to solve counting problems \acsecondmts{} and \acsecondfix{}, respectively.
Given a BN \(\bn\) and a phenotype \(\phen\),
we compute an ASP program $\toaspalgname{\phen}$ to capture the phenotype \(\phen\) by invoking~\Cref{alg:to_asp}.
The algorithm exploits atoms introduced in the \tsconj encoding to interpret the phenotype $\phen$.
The main idea of~\Cref{alg:to_asp} is exploiting {\em faceted answer set navigation} to constrain the search space of answer sets~\cite{FGR2022}. 
We prove that the minimal trap spaces of \(\bn\) satisfying the phenotype \(\phen\) one-to-one correspond to the answer sets of \(\tsconjp{f} \cup \toaspalgname{\phen}\) (\cref{theo:encoding-correctness-mts}).
\begin{algorithm}[t]
	\caption{$\toaspalgname{\phen}$}
	\label{alg:to_asp}
	\hspace*{\algorithmicindent} \textbf{Input:} Phenotype $\phen$ \\
	\hspace*{\algorithmicindent} \textbf{Output:} ASP program $\phenasp$
	\algnotext{EndFor}
	\algnotext{EndIf}
	\begin{algorithmic}[1]
		\State $\phenasp \gets \emptyset$
		\ForEach {$(v \leftrightarrow e) \in \phen$} 
		\If {$e = 1$}\label{line:condition_vee}
		\State $\phenasp.\mathsf{add}(\bot \leftarrow \dng{\pos{v}}, \qquad \bot \leftarrow \ngt{v})$ \label{line:true_value}
		\ElsIf {$e = 0$} 
		\State $\phenasp.\mathsf{add}(\bot \leftarrow \pos{v}, \qquad \bot \leftarrow \dng{\ngt{v}})$ \label{line:false_value}
		\ElsIf {$e = \star$} 
		\State $\phenasp.\mathsf{add}(\bot \leftarrow \dng{\pos{v}}, \qquad \bot \leftarrow \dng{\ngt{v}})$ \label{line:star_value}
		\EndIf
		\EndFor
		\State \Return {$\phenasp$}
	\end{algorithmic}
\end{algorithm}

\begin{restatable}{theorem}{rcmtscorrectness}\label{theo:encoding-correctness-mts}
	Given a BN \(\bn\) and a phenotype \(\phen\),
	the minimal trap spaces of \(\bn\) satisfying \(\phen\) one-to-one correspond to the answer sets of \(\tsconjp{f} \cup \toaspalgname{\phen}\).
\end{restatable}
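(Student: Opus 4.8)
The plan is to split the claim into two essentially orthogonal parts --- the already-established correctness of the \tsconj encoding, and the effect of adjoining the integrity constraints emitted by \Cref{alg:to_asp} --- and then combine them. By the correctness of \tsconj (Trinh et al.~\cite{TBPS2024}, recalled in \Cref{sec:related-work}; its DNF fallback for unsafe formulas makes the correspondence hold unconditionally), the map that sends an answer set $M$ of $\tsconjp{f}$ to the sub-space $m$ with $m(v)=1 \iff \pos{v}\in M \wedge \ngt{v}\notin M$, $m(v)=0 \iff \pos{v}\notin M \wedge \ngt{v}\in M$, and $m(v)=\star \iff \pos{v},\ngt{v}\in M$ is a bijection onto the set of minimal trap spaces of $\bn$; the disjunctive fact $\pos{v}\vee\ngt{v}\leftarrow\top$ makes these three cases exhaustive and mutually exclusive in every answer set, so $m$ is well defined. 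It therefore suffices to show that $\as{\tsconjp{f}\cup\toaspalgname{\phen}}$ equals the set of those $M\in\as{\tsconjp{f}}$ whose image $m$ satisfies $\phen$; the asserted correspondence is then the restriction of the above bijection to these two subsets.

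For the step $\as{\tsconjp{f}\cup\toaspalgname{\phen}} = \{M\in\as{\tsconjp{f}} : M\models\toaspalgname{\phen}\}$ I would invoke the standard fact that appending integrity constraints to an ASP program can only filter its answer sets, and spell it out from the GL reduct. Since $\toaspalgname{\phen}$ consists solely of constraints (rules with $\head{r}=\emptyset$) over atoms $\pos{v},\ngt{v}$ that already occur in $\tsconjp{f}$, for any interpretation $M$ the reduct decomposes as $(\tsconjp{f}\cup\toaspalgname{\phen})^M = \tsconjp{f}^M \cup \toaspalgname{\phen}^M$, where a constraint $r\in\toaspalgname{\phen}$ with $M\cap\nbody{r}=\emptyset$ contributes the head-less rule $\bot\leftarrow\pbody{r}$, forbidding exactly those $M'$ with $\pbody{r}\subseteq M'$, and a constraint with $M\cap\nbody{r}\neq\emptyset$ simply vanishes. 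A short argument in both directions then gives: $M$ is a $\subseteq$-minimal model of $(\tsconjp{f}\cup\toaspalgname{\phen})^M$ iff $M$ is a $\subseteq$-minimal model of $\tsconjp{f}^M$ and $M\models\toaspalgname{\phen}$ (the forward direction uses that any $M'\subsetneq M$ that models $\tsconjp{f}^M$ also models $\toaspalgname{\phen}^M$ whenever $M$ does, since $\pbody{r}\not\subseteq M$ implies $\pbody{r}\not\subseteq M'$).

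It remains to verify, for $M\in\as{\tsconjp{f}}$ with image $m$, that $M\models\toaspalgname{\phen}$ iff $m\models\phen$. As $\phen$ is a conjunction of traits and $\toaspalgname{\phen}$ is the union of the constraint sets emitted for each trait, it suffices to argue per trait $(v\leftrightarrow e)$. For $e=1$, Line~\ref{line:true_value} contributes $\bot\leftarrow\dng{\pos{v}}$ and $\bot\leftarrow\ngt{v}$, which $M$ satisfies iff $\pos{v}\in M$ and $\ngt{v}\notin M$, i.e.\ iff $m(v)=1$, i.e.\ iff $m\models(v\leftrightarrow 1)$; the cases $e=0$ (Line~\ref{line:false_value}: equivalent to $\pos{v}\notin M\wedge\ngt{v}\in M$, i.e.\ $m(v)=0$) and $e=\star$ (Line~\ref{line:star_value}: equivalent to $\pos{v},\ngt{v}\in M$, i.e.\ $m(v)=\star$, using that $v\leftrightarrow\star$ holds exactly when $m(v)=\star$) are identical. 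Chaining the three equivalences yields $M\in\as{\tsconjp{f}\cup\toaspalgname{\phen}}$ iff $m$ is a minimal trap space of $\bn$ with $m\models\phen$, which is the statement.

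I expect the only genuinely delicate point to be the integrity-constraint lemma: one must push the default negation occurring in the emitted constraints (such as $\bot\leftarrow\dng{\pos{v}}$) through the GL reduct rather than treat it informally, and one must note that $\toaspalgname{\phen}$ adds no fresh atoms, so it cannot perturb the minimality comparison that underlies $\as{\tsconjp{f}}$. Everything else is a mechanical unwinding of the answer-set/sub-space translation.
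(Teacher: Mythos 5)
Your proposal is correct and follows essentially the same route as the paper: establish the bijection via the known correctness of \tsconj, then argue trait by trait that the constraints emitted by Algorithm~\ref{alg:to_asp} retain exactly the answer sets whose corresponding sub-space satisfies the trait. The only difference is presentational — where the paper cites faceted answer set navigation~\cite{FGR2022} for the fact that adjoining integrity constraints merely filters answer sets, you derive that lemma directly from the GL reduct, which makes the argument more self-contained but does not change the proof strategy.
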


For \acsecondfix{}, we can apply the \fasp~encoding and~\Cref{alg:to_asp} similarly.
Note that in the counting problem \acsecondfix{}, the term \(e\) is either $0$ or $1$, for each \((v \leftrightarrow e) \in \phen\), since fixed points require all variables to be fixed.
We formally prove the correctness of our proposed method for \acsecondfix{} in~\Cref{theo:encoding-correctness-fix}.

\begin{restatable}{theorem}{rcfixcorrectness}\label{theo:encoding-correctness-fix}
	Given a BN \(\bn\) and a phenotype \(\phen\),
	the fixed points of \(\bn\) satisfying \(\phen\) one-to-one correspond to the answer sets of \(\faspp{f} \cup \toaspalgname{\phen}\).
\end{restatable}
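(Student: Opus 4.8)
The plan is to mirror the argument for \Cref{theo:encoding-correctness-mts}, replacing the \tsconj encoding by \fasp and using the known correctness of \fasp in place of that of \tsconj. Concretely, I would rely on the established fact~\cite{TBS2023} that the map sending an answer set $M$ of $\faspp{f}$ to the sub-space $m$ with $m(v) = 1 \iff \pos{v} \in M$ and $m(v) = 0 \iff \ngt{v} \in M$ is a bijection between $\as{\faspp{f}}$ and the set of fixed points of $f$. Recall that in $\faspp{f}$ the rule $\pos{v} \vee \ngt{v} \leftarrow \top$ together with the constraint $\bot \leftarrow \pos{v}, \ngt{v}$ forces exactly one of $\pos{v}$, $\ngt{v}$ into every answer set, so this translation is well defined and never yields a free variable --- which is also why the $e = \star$ branch of \Cref{alg:to_asp} is unreachable for \acsecondfix{}.

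Step one is a generic observation about integrity constraints: for any program $P$ and any set $C$ of rules with empty head, $\as{P \cup C} = \{M \in \as{P} \mid M \models C\}$. I would prove this by checking both inclusions against the Gelfond--Lifschitz reduct. The only slightly delicate point is minimality: if $M \models C$ and $M' \subsetneq M$ were a model of $P^M$, then since $C$ (hence $C^M$) consists only of empty-head rules and any constraint body not contained in $M$ is a fortiori not contained in the smaller $M'$, the set $M'$ would also satisfy $C^M$, so $M'$ would be a model of $(P \cup C)^M$ strictly below $M$, contradicting $M \in \as{P \cup C}$; the converse inclusion is immediate. Applying this with $P = \faspp{f}$ and $C = \toaspalgname{\phen}$ --- which by construction is a set of constraints --- reduces the theorem to showing that for every $M \in \as{\faspp{f}}$ with associated fixed point $m$, we have $M \models \toaspalgname{\phen}$ iff $m \models \phen$.

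Step two is this last equivalence, proved trait by trait. Since $\phen$ is a conjunction of traits $(v \leftrightarrow e)$ with $e \in \twod$, and both $m \models \phen$ and $M \models \toaspalgname{\phen}$ decompose as conjunctions over the traits (over the two constraints \Cref{alg:to_asp} emits per trait), it suffices to match each trait with its constraints. For $e = 1$, the constraints $\bot \leftarrow \dng{\pos{v}}$ and $\bot \leftarrow \ngt{v}$ hold in $M$ iff $\pos{v} \in M$ and $\ngt{v} \notin M$, which by the \fasp translation is exactly $m(v) = 1$, i.e.\ $m \models (v \leftrightarrow 1)$; the case $e = 0$ is symmetric, with $\bot \leftarrow \pos{v}$ and $\bot \leftarrow \dng{\ngt{v}}$ characterizing $m(v) = 0$. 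Hence the bijection $M \leftrightarrow m$ of~\cite{TBS2023} restricts to a bijection between $\as{\faspp{f} \cup \toaspalgname{\phen}}$ and the set of fixed points of $f$ satisfying $\phen$.

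I do not expect a genuine obstacle: the argument is essentially the \acsecondmts{} proof with one simplification, namely that because \fasp never assigns $\star$, the case analysis on free variables that \tsconj requires disappears. The only step needing care is the generic integrity-constraint lemma, specifically the reduct-level minimality check sketched above; the remainder is routine bookkeeping over the branches of \Cref{alg:to_asp} (only two of which are reachable in this setting).
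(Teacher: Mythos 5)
Your proposal is correct and follows essentially the same route as the paper: the paper's (very terse) proof also reduces \Cref{theo:encoding-correctness-fix} to the known bijection between $\as{\faspp{f}}$ and the fixed points of $f$, plus the fact that the constraints emitted by \Cref{alg:to_asp} filter answer sets trait by trait. The only cosmetic difference is that the paper invokes faceted answer set navigation~\cite{FGR2022} for the integrity-constraint filtering property, whereas you prove that property directly from the Gelfond--Lifschitz reduct, which is a sound and self-contained substitute.
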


\begin{example}
	Consider the BN \(f\) of~\Cref{exam:straight-BN}.
	The program $\tsconjp{f}$ is as follows:
	\begin{align*}
		&\pos{a} \vee \ngt{a} \leftarrow \top \quad \quad \pos{a} \leftarrow \pos{a}, \ngt{b} \quad \quad \ngt{a} \leftarrow \aux{1} \quad \quad \aux{1} \leftarrow \ngt{a} \quad \quad \aux{1} \leftarrow \pos{b} \\
		&\pos{b} \vee \ngt{b} \leftarrow \top \quad \quad \pos{b} \leftarrow \pos{a} \quad \quad \ngt{b} \leftarrow \ngt{a}
	\end{align*}
	The program $\tsconjp{f}$ has a unique answer set \(\{\ngt{a}, \ngt{b}, \aux{k}\}\) corresponding to the unique minimal trap space $00$ of \(f\).
	Consider the phenotype \(\phen = (b \leftrightarrow \star)\),
	the BN \(f\) has no minimal trap space satisfying \(\phen\) (see~\Cref{exam:four-first-problems}), thus \(\acsecondmts{}\) returns $0$.
	Following the procedure outlined above, the ASP program \(\toaspalgname{\phen}\) is as follows:
	\begin{align*}
		&\bot \leftarrow \dng{\pos{b}} \qquad \bot \leftarrow \dng{\ngt{b}}
	\end{align*}
	Indeed, the program \(\tsconjp{f} \cup \toaspalgname{\phen}\) has no answer set.
\end{example}

\subsection{Methods for Problems \acthirdmts{} and \acthirdfix{}}\label{subsec:methods-third}

Given a BN \(\bn\), a phenotype \(\phen\), and a set of perturbable variables \(\pert\),
one possible approach to solving these problems is to introduce new atoms to represent possible perturbations over perturbable variables, then correspondingly to intervene in the encoded ASP program obtained by applying the encodings proposed for \acsecondmts{} and \acsecondfix{}, and finally to apply projected counting restricted to these new atoms.
Instead, we propose a more convenient approach that reduces \acthirdmts{} (resp.\ \acthirdfix{}) to \acsecondmts{} (resp.\ \acsecondfix{}) along with projected answer set counting.

\begin{definition}\label{def:BN-perturbation-trans}
	Consider a BN \(f\) and a set of perturbable variables \(\pert \subseteq \var{f}\),
	we construct a new BN \(g\) such that for every \(v \in \var{f}\), if \(v \in \var{f} \setminus \pert\), then the variable \(v \in \var{g}\) and \(g_v = f_v\), and if \(v \in \pert\), then three variables \(v, v^k, v^o \in \var{g}\) and 
	\begin{align*}
		g_v &= \neg v^k \land (v^o \lor f_v), \\
		g_{v^k} &= v^k, \\
		g_{v^o} &= v^o \land \neg v^k.
	\end{align*}
\end{definition}

Instead of modifying the ASP encoding to model perturbations, we construct a perturbed version of the original BN, thereby preserving the semantics of the original encodings.
For each perturbable variable $v \in \pert$, we introduce two new variables in $g$: $v^k$, encoding whether $v$ is knocked out, and $v^o$, encoding whether it is over-expressed. By construction:
\begin{itemize}
	\item If $v^k = 1$, then $v^o = 0$ and $v = 0$, modeling a knockout of $v$.
	\item If $v^k = 0$ and $v^o = 1$, then $v = 1$, modeling over-expression.
	\item If $v^k = 0$ and $v^o = 0$, then $v$ follows its original update function $f_v$ ($v$ is unperturbed).
	\item The case $v^k = 1$ and $v^o = 1$ is infeasible due to the constraint $g_{v^o} = v^o \land \neg v^k$.
\end{itemize}
Since $\var{\bn} \subseteq \var{g}$, any phenotype $\phen$ defined over $\bn$ remains valid in $g$.
The minimal trap spaces (resp. fixed points) of $g$ correspond to those of $f$ under all possible perturbations over $\pert$.
Thus, the minimal trap spaces (resp. fixed points) of $g$ that satisfy the phenotype $\phen$ represent the perturbed solutions of $f$ that also satisfy $\phen$.
Crucially, for each perturbation, multiple satisfying minimal trap spaces (or fixed points) are counted once, enabling us to count the number of satisfying perturbations.
Hence, the counting problem \acthirdmts{} (resp. \acthirdfix{}) reduces to the projected version of \acsecondmts{} (resp. \acsecondfix{}) over the newly defined BN $g$.

We now discuss how we solve \acthirdmts{} and \acthirdfix{} by applying projected answer set counting.
We focus on the case for minimal trap spaces, and the case of fixed points is trivially similar.
Following~\Cref{theo:encoding-correctness-mts}, the set of answer sets of \(\tsconjp{g} \cup \toaspalgname{\phen}\) represents the set of minimal trap spaces of \(g\) satisfying the phenotype \(\phen\).
Let \(\aproj = \bigcup_{v \in \proj}\{\pos{v}, \ngt{v}\}\) denote the set of perturbation-related variables, where $\proj = \bigcup_{v \in \pert}\{v^k, v^o\}$ be the set of ASP atoms used in the encoding.
It follows that the number of answer sets of \(\tsconjp{g} \cup \toaspalgname{\phen}\), projected onto the set \(\aproj\) is equal to  the number of perturbation settings (i.e., assignments to variables in $\pert$) under which $g$ admits a minimal trap space satisfying $\phen$.

\begin{restatable}{theorem}{correctnessperturbationreductionMTS}\label{theo:correctness-perturbation-reduction-MTS}
	Given a BN \(f\), a set of perturbable variables \(\pert \subseteq \var{f}\), a phenotype \(\phen\),
	and \(\aproj = \bigcup_{v \in \proj}\{\pos{v}, \ngt{v}\}\), where $\proj = \bigcup_{v \in \pert}\{v^k, v^o\}$,
	then \acthirdmts{} can be computed as \(\projaspcount{\tsconjp{g} \cup \toaspalgname{\phen}, \aproj}\), where $g$ is the new BN according to~\Cref{def:BN-perturbation-trans}.
\end{restatable}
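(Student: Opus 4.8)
The plan is to use \Cref{theo:encoding-correctness-mts} (applied to the BN $g$ in place of $f$) to turn the statement into a purely combinatorial claim about the minimal trap spaces of $g$, and then to prove that claim by exhibiting a bijection between the minimal trap spaces of $g$ and the pairs $(\sigma, n)$ where $\sigma$ is a perturbation on $\pert$ and $n$ is a minimal trap space of $\bn^{\sigma}$. Concretely: by \Cref{theo:encoding-correctness-mts} the answer sets of $\tsconjp{g} \cup \toaspalgname{\phen}$ correspond one-to-one to the minimal trap spaces $m$ of $g$ satisfying $\phen$, and in this correspondence $M \cap \{\pos{w}, \ngt{w}\}$ records $m(w)$ for every variable $w$ of $g$; hence $\projaspcount{\tsconjp{g} \cup \toaspalgname{\phen}, \aproj}$ equals the number of distinct restrictions $m|_{\proj}$ obtained as $m$ ranges over minimal trap spaces of $g$ satisfying $\phen$. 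It therefore suffices to prove that this number equals $|\{\sigma \;:\; \bn^{\sigma} \text{ has a minimal trap space satisfying } \phen\}|$.

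\emph{Step 1: the control variables are forced.} First I would show that in every minimal trap space $m$ of $g$ and every $v \in \pert$, both $m(v^k)$ and $m(v^o)$ are Boolean and $(m(v^k), m(v^o)) \neq (1,1)$. The only ingredient is the elementary fact that if $m$ is a trap space, $m(w) = \star$, and $m[w \mapsto b]$ is again a trap space for some $b \in \twod$, then $m$ is not minimal (its state set strictly shrinks, while the trap conditions of the other variables only become easier). Since $g_{v^k} = v^k$, both $m[v^k \mapsto 0]$ and $m[v^k \mapsto 1]$ are trap spaces whenever $m(v^k) = \star$, so $m(v^k) \in \twod$. Given that: if $m(v^k) = 1$ then $g_{v^o}$ and $g_v$ are constantly $0$ on $\mathcal{S}[m]$, forcing $m(v^o) = m(v) = 0$; if $m(v^k) = 0$ then $g_{v^o}$ restricts to $s_{v^o}$ on $\mathcal{S}[m]$, forcing $m(v^o) \in \twod$ just as for a source variable, and if moreover $m(v^o) = 1$ then $g_v$ is constantly $1$, forcing $m(v) = 1$. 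The combination $(1,1)$ is impossible because $g_{v^o} = v^o \wedge \neg v^k$. Hence each minimal trap space $m$ of $g$ determines a perturbation $\sigma_m$ on $\pert$ via $(m(v^k), m(v^o)) = (1,0), (0,1), (0,0) \mapsto \sigma_m(v) = 0, 1, \star$, and $m|_{\proj}$ is exactly the ASP-level encoding of $\sigma_m$; in particular two minimal trap spaces of $g$ have the same restriction to $\proj$ iff they induce the same perturbation.

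\emph{Step 2: transfer and bijection.} For a perturbation $\sigma$, I would define the lift $n \mapsto n^{\uparrow\sigma}$ taking a sub-space $n$ of $\bn^{\sigma}$ with $n(v) = \sigma(v)$ whenever $\sigma(v) \in \twod$ to the sub-space of $g$ agreeing with $n$ on $\var{\bn}$ and fixing $(v^k, v^o)$ according to $\sigma$ as in Step 1. The key lemma is: for such $n$, $n$ is a trap space of $\bn^{\sigma}$ iff $n^{\uparrow\sigma}$ is a trap space of $g$. This is proved by checking the variables of $g$ one at a time, using that $\mathcal{S}[n^{\uparrow\sigma}]$ projects bijectively onto $\mathcal{S}[n]$ (the control coordinates are fixed), that $g_v$ coincides with $f^{\sigma}_v$ on the $\var{\bn}$-coordinates for $v \notin \pert$ and for $\sigma(v) = \star$, and that $g_v$ is constantly $\sigma(v)$ in the perturbed cases, matching $n(v) = \sigma(v)$. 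I would then verify that this lift and the restriction $m \mapsto m|_{\var{\bn}}$ (available by Step 1) are mutually inverse and preserve minimality: a trap space contained in a minimal trap space $m$ of $g$ cannot free the already-fixed control variables, so it restricts to a strictly smaller trap space of $\bn^{\sigma_m}$; conversely, a trap space strictly inside $\bn^{\sigma}$ -- which may first be shrunk so that all perturbed variables are fixed, again by the non-minimality fact -- lifts to a strictly smaller trap space of $g$. This yields the desired bijection, under which $m \leftrightarrow (\sigma_m, m|_{\var{\bn}})$.

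\emph{Step 3 and the main obstacle.} Because $\phen$ involves only variables of $\var{\bn} \subseteq \var{g}$ and $m|_{\var{\bn}}$ is the minimal trap space $n$ of $\bn^{\sigma_m}$ paired with $m$, we have $m \models \phen$ iff $n \models \phen$. Combining with Steps 1--2, the distinct values of $m|_{\proj}$ over minimal trap spaces $m$ of $g$ satisfying $\phen$ biject with $\{\sigma : \bn^{\sigma} \text{ has a minimal trap space satisfying } \phen\}$, which has exactly the cardinality the definition of \acthirdmts{} asks for; together with the reduction in the first paragraph this proves the theorem. I expect the delicate point to be the two-way preservation of minimality in Step 2: one must argue carefully that a sub-space contained in a minimal trap space of $g$ cannot ``unfix'' the control variables $v^k, v^o$, and, in the other direction, that lifting a trap space of $\bn^{\sigma}$ in which a perturbed variable is still free still yields a valid trap space of $g$ -- precisely where the constant-function / source-variable bookkeeping of Step 1 is needed. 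The remaining checks are routine case distinctions on $\sigma(v) \in \{0, 1, \star\}$.
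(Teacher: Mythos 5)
Your proposal is correct and follows essentially the same route as the paper's proof: it mirrors the paper's Lemma~\ref{lem:transformed-BN-mts-fixed-value} (control variables $v^k, v^o$ are Boolean and never $(1,1)$ in a minimal trap space of $g$), Lemma~\ref{lem:trap-space-projection} together with part (1) of the paper's argument (the restriction/lift correspondence between trap spaces of $g$ and of $\bn^{\sigma}$, with minimality transferred in both directions by contradiction), and the final reduction to projected answer set counting via \Cref{theo:encoding-correctness-mts}. The only difference is cosmetic: you justify the fixedness of the control variables via a general monotonicity-of-Kleene-evaluation argument, whereas the paper carries out the explicit three-valued computations.
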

\begin{restatable}{theorem}{correctnessperturbationreductionFIX}\label{theo:correctness-perturbation-reduction-FIX}
	Given a BN \(f\), a set of perturbable variables \(\pert \subseteq \var{f}\), a phenotype \(\phen\),
	and \(\aproj = \bigcup_{v \in \proj}\{\pos{v}, \ngt{v}\}\), where $\proj = \bigcup_{v \in \pert}\{v^k, v^o\}$,
	then \acthirdfix{} can be computed as \(\projaspcount{\faspp{g} \cup \toaspalgname{\phen}, \aproj}\), where $g$ is the new BN following~\Cref{def:BN-perturbation-trans}.
\end{restatable}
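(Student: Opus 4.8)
The plan is to reuse the strategy of \Cref{theo:correctness-perturbation-reduction-MTS}; the fixed-point case is in fact a little easier, since fixed points are fully specified states and carry no minimality side-condition. The argument composes two bijections. First I would establish a bijection at the Boolean-network level: the fixed points of the network $g$ of \Cref{def:BN-perturbation-trans} are in one-to-one correspondence with pairs $(\sigma, s)$, where $\sigma$ is a perturbation on $\pert$ and $s$ is a fixed point of $\bn^{\sigma}$. Then I would invoke \Cref{theo:encoding-correctness-fix}, applied to $g$, to pass from fixed points of $g$ satisfying $\phen$ to answer sets of $\faspp{g} \cup \toaspalgname{\phen}$. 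Finally I would show that projecting an answer set onto $\aproj$ records exactly the perturbation component $\sigma$, so that the projected answer set count equals the number of perturbations $\sigma$ under which $\bn^{\sigma}$ admits a fixed point satisfying $\phen$, which is by definition the answer to \acthirdfix{}.

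For the first bijection, let $t$ be a fixed point of $g$. For each $v \in \pert$, since $g_{v^k} = v^k$ and $g_{v^o} = v^o \land \neg v^k$, the pair $(t_{v^k}, t_{v^o})$ must lie in $\{(0,0),(0,1),(1,0)\}$; I read off $\sigma(v) = \star$, $\sigma(v) = 1$, $\sigma(v) = 0$ in these three cases respectively ($\sigma$ being undefined outside $\pert$). Evaluating $g_v = \neg v^k \land (v^o \lor f_v)$ at $t$ in the three cases yields $f_v(t)$, $1$, and $0$, i.e.\ exactly $f^{\sigma}_v$, while for $v \notin \pert$ we have $g_v = f_v = f^{\sigma}_v$. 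As every $f_v$ references only variables in $\var{\bn} \subseteq \var{g}$, the restriction $s := t|_{\var{\bn}}$ satisfies $s_v = f^{\sigma}_v(s)$ for all $v \in \var{\bn}$, so $s$ is a fixed point of $\bn^{\sigma}$. Conversely, given a perturbation $\sigma$ and a fixed point $s$ of $\bn^{\sigma}$, define $t$ by $t|_{\var{\bn}} = s$ and, for $v \in \pert$, set $(t_{v^k}, t_{v^o})$ to the unique feasible pair encoding $\sigma(v)$; a direct check using the three cases above shows $t$ is a fixed point of $g$, and the two maps are mutually inverse. Since $\phen$ is a conjunction of traits over $\var{\bn}$ and $t$, $s$ agree on $\var{\bn}$, we have $t \models \phen$ iff $s \models \phen$, so the bijection restricts to the solutions satisfying $\phen$.

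For the projection step, recall that in the \fasp encoding every answer set contains exactly one of $\pos{u}$, $\ngt{u}$ for each atom $u$ (from $\pos{u} \vee \ngt{u} \leftarrow \top$ together with $\bot \leftarrow \pos{u}, \ngt{u}$), and the correspondence of \Cref{theo:encoding-correctness-fix} sends an answer set $M$ to the fixed point $t$ of $g$ determined by $t_u = 1 \iff \pos{u} \in M$. Hence $M \cap \aproj$ records precisely the Boolean values $t_{v^k}, t_{v^o}$ for $v \in \pert$, which by the first bijection is exactly the perturbation $\sigma$ attached to $t$; conversely, $\sigma$ determines $M \cap \aproj$. Thus two answer sets of $\faspp{g} \cup \toaspalgname{\phen}$ have the same projection onto $\aproj$ iff they correspond to fixed points of $g$ with the same perturbation component, so $\projaspcount{\faspp{g} \cup \toaspalgname{\phen}, \aproj}$ counts exactly the perturbations $\sigma$ on $\pert$ for which $\bn^{\sigma}$ has at least one fixed point satisfying $\phen$ — the answer to \acthirdfix{}.

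I expect the main obstacle to be the careful verification of the Boolean-network bijection, in particular that a fixed point of $g$ restricts to a fixed point of $\bn^{\sigma}$ and conversely, while correctly enforcing the feasibility constraint $t_{v^k} = 1 \Rightarrow t_{v^o} = 0$ implied by $g_{v^o} = v^o \land \neg v^k$. Everything downstream — the appeal to \Cref{theo:encoding-correctness-fix} and the bookkeeping of the projection onto $\aproj$ — is then routine and parallels the proof of \Cref{theo:correctness-perturbation-reduction-MTS} with the minimality argument removed.
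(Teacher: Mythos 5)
Your proposal is correct and follows essentially the same route as the paper: the paper's own proof of this theorem is a one-line deferral to the proof of \Cref{theo:correctness-perturbation-reduction-MTS}, whose structure (a BN-level correspondence between solutions of $g$ and perturbation/solution pairs of $f^{\sigma}$, followed by the encoding-correctness theorem and the projection bookkeeping over $\aproj$) is exactly what you instantiate. You also correctly identify the simplifications available in the fixed-point case --- no minimality argument and no three-valued evaluation are needed --- which is precisely what ``similarly extended'' amounts to here.
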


\begin{example}\label{exam:third-MTS-projected}
	Consider again~\Cref{exam:third-MTS}.
	Following~\Cref{def:BN-perturbation-trans}, we obtain the BN \(g\): \(\var{g} = \{a, b, b^k, b^o\}\), \(g_a = a \land \neg b\), \(g_{b^k} = b^k\), \(g_{b^o} = b^o \land \neg b^k\), and \(g_{b} = \neg b^k \land (b^o \lor a)\).
	(see Appendix~\ref{sec:detailed-ASP-programs} for the details of \(\tsconjp{g}\) and \(\toaspalgname{\phen}\)).
	Let \(\aproj = \{\pos{b^k}, \ngt{b^k}, \pos{b^o}, \ngt{b^o}\}\).
	Then \acthirdmts{} can be computed as $\projaspcount{\tsconjp{g} \cup \toaspalgname{\phen}, \aproj}$.
	Indeed, $\projaspcount{\allowbreak \tsconjp{g} \cup \toaspalgname{\phen}, \aproj}$ returns $2$, which is consistent with the result shown in~\Cref{exam:third-MTS}.
\end{example}

\section{Experimental Evaluation}\label{sec:experimental-results}

This section presents the experimental evaluation of the presented methods.
We use existing minimal trap space and fixed point computation tools as baselines---namely, AEON~\cite{BKPS2020}, k++ADF (ADF)~\cite{LMNWW22}, and clingo~\cite{GKKOSS2011}.
The ADF tool is applicable here due to the equivalence between ADFs and BNs~\cite{ASDO2024,HMJ2024}, which supports only \acfirstmts{} and \acfirstfix{}.
k++ADF and Clingo count minimal trap spaces and fixed points via enumeration on the ADF and the encoded ASP respectively, and AEON via BDD-based encoding.
For the fixed point problem variants, we further considered the propositional model counter~\cite{SRSM19} (GANAK) and the approximate model counter~\cite{YM2023} (ApproxMC), using the translation to CNF (see~\Cref{sec:related-work}).
However, note that these techniques cannot be directly used for minimal trap space counting. 
For approximate answer set counting, we employed the hashing-based approximate answer set counter ApproxASP~\cite{KESHFM22} with parameters $\varepsilon = 0.8$ and $\delta = 0.2$.
Following prior work on counting~\cite{KM2024}, we provided ApproxASP with an independent support of a disjunctive ASP program exploiting Padoa theorem~\cite{Padoa1901}.
We also tested the tree decomposition-based answer set counter DynASP~\cite{FHMW2017}; however, we did not include it in the final analysis because it has been significantly outperformed by the remaining baselines.
Finally, note that we could not include \#SAT-based ASP counters aspmc~\cite{EHK2024} and sharpASP~\cite{KCM2024} as baselines, since these counters are designed for normal logic programs, while \tsconj~and \fasp~encodings produce disjunctive ASP programs.

We compiled our benchmark set from prior studies on minimal trap spaces and fixed points in BNs~\cite{pastva2023repository,TBPS2024,TBS2023}. The set comprises $645$ BN instances---$245$ real-world models and $400$ randomly generated---with up to $5,\!000$ variables. To evaluate counting problems \acsecondmts{}, \acsecondfix{}, \acthirdmts{}, and \acthirdfix{}, we pseudo-randomly fixed three variables to represent the target phenotype and selected up to $50$ perturbable variables (yielding as many as $3^{50}$ possible perturbations).
Phenotypes are typically not published in machine-readable format, thus we maintain biological interpretability by deriving the phenotype from a known trap space, linking it to an existing biological feature of the network.
As for the chosen size, only a few variables are sufficient to identify phenotypes: e.g., in~\cite{Fischer2021}, only 10-200 entities are needed out of ~10,000. 
Since our tests are often significantly smaller, we scaled down the phenotypes accordingly.
Appendix~\ref{sec:detailed-benchmark-inputs} provides further details about the benchmark.
The code and dataset of experiment evaluation is available at: \url{https://zenodo.org/records/15141045}

\textbf{Environmental Settings.} All experiments were conducted on a high-performance computing cluster, with each node consisting of Intel Xeon Gold $6248$ CPUs. 
Each benchmark instance was allocated one core, with runtime and memory limits set to $5000$ seconds and $8$ GB respectively, for all the tools considered.

\subsection{Experimental Results}

\textbf{\acfirstmts{} and \acfirstfix{}} The results for \acfirstmts{} and \acfirstfix{} are shown in~\Cref{table:c_mts_1_result} and~\Cref{table:c_fix_1_result}, respectively.
Each table reports the number of instances solved (i.e., instances for which a count was successfully returned) by each tool and their corresponding PAR$2$ scores~\cite{SAT2017} (PAR$2$ score is a runtime metric that also penalizes benchmark timeouts). 
Here, approximate counting (ApproxASP and ApproxMC) clearly outperform all existing solutions. 
Even compared to exact model counting (GANAK), this approach achieves significantly better performance. 
Note that for \acfirstfix{}, ApproxASP and ApproxMC are roughly comparable, but (as discussed in detail later), ApproxASP is faster on simpler instances.

It is worth noting that unsafe formulas are quite rare in 245 real-world models, which is consistent with the observation in~\cite{TBPS2024}.
All 400 randomly generated models have no unsafe formulas because of the nature of the generation~\cite{TBPS2024}.

\begin{table}[h]
    \centering
    \begin{tabular}{m{4em} m{4em} m{4em} m{4em} m{5em}} 
    \toprule
    & AEON & ADF & clingo & ApproxASP\\
    \midrule
    \#Solved & 179 & 200 & 211 & \textbf{364}\\
    \midrule
    PAR$2$ & 7255 & 6923 & 6742 & \textbf{4448}\\
    \bottomrule
    \end{tabular}
    \caption{The performance comparison of different counters on \acfirstmts{} counting problem.}
    \label{table:c_mts_1_result}
\end{table}
\begin{table}[h]
    \centering
    \begin{tabular}{m{4em} m{3em} m{3em} m{3em} m{4em} m{5em} m{5em}} 
    \toprule
    & AEON & ADF & clingo & GANAK & ApproxMC & ApproxASP \\
    \midrule
    \#Solved & 247 & 217 & 227 & 317 & \textbf{420} & 413\\
    \midrule
    PAR$2$ & 6172 & 6656 & 6493 & 5269 & 3801 & \textbf{3760}\\
    \bottomrule
    \end{tabular}
    \caption{The performance comparison of different counters on \acfirstfix{} counting problem.}
    \label{table:c_fix_1_result}
\end{table}

\begin{figure*}
    \centering
    \begin{subfigure}[t]{0.32\textwidth}
        \centering
        \includegraphics[width=0.99\linewidth]{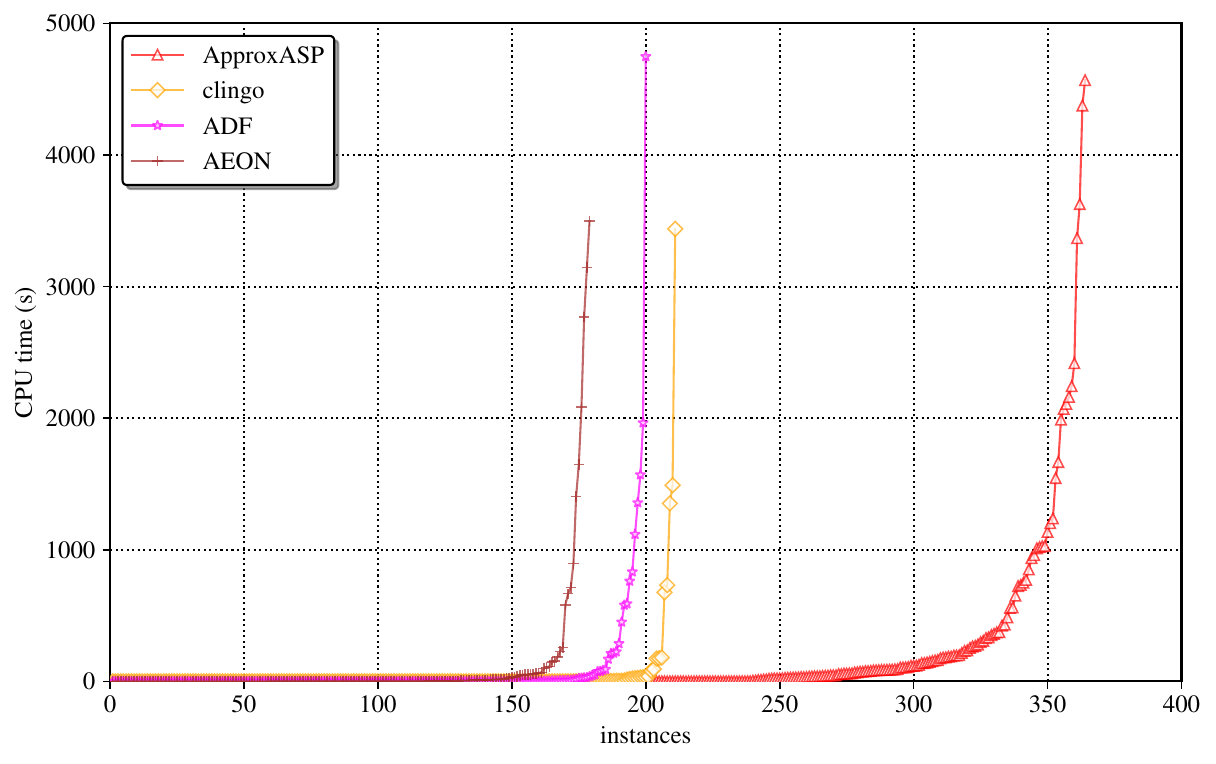}
        \caption{\acfirstmts{}}
        \label{fig:count_mts_1}
    \end{subfigure}
    \begin{subfigure}[t]{0.32\textwidth}
        \centering
        \includegraphics[width=0.99\linewidth]{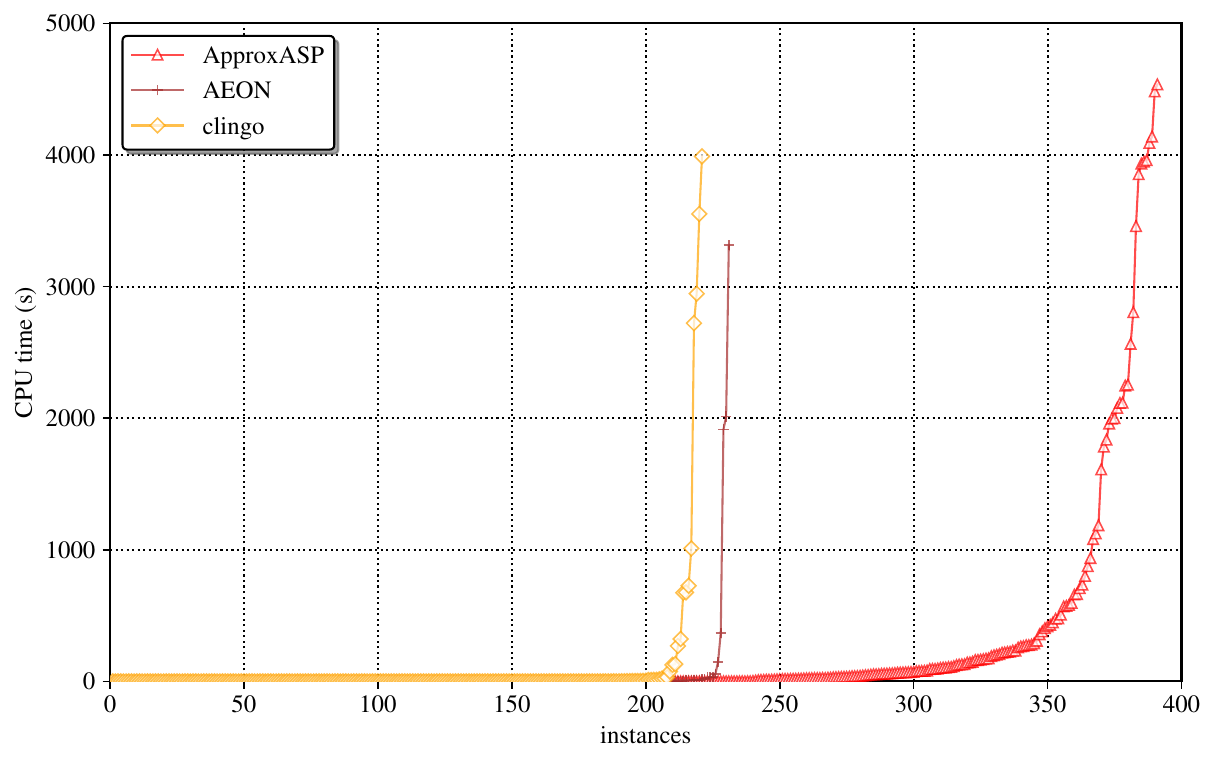}
        \caption{\acsecondmts{}}
        \label{fig:count_mts_2}
    \end{subfigure}
    \begin{subfigure}[t]{0.32\textwidth}
        \centering
        \includegraphics[width=0.99\linewidth]{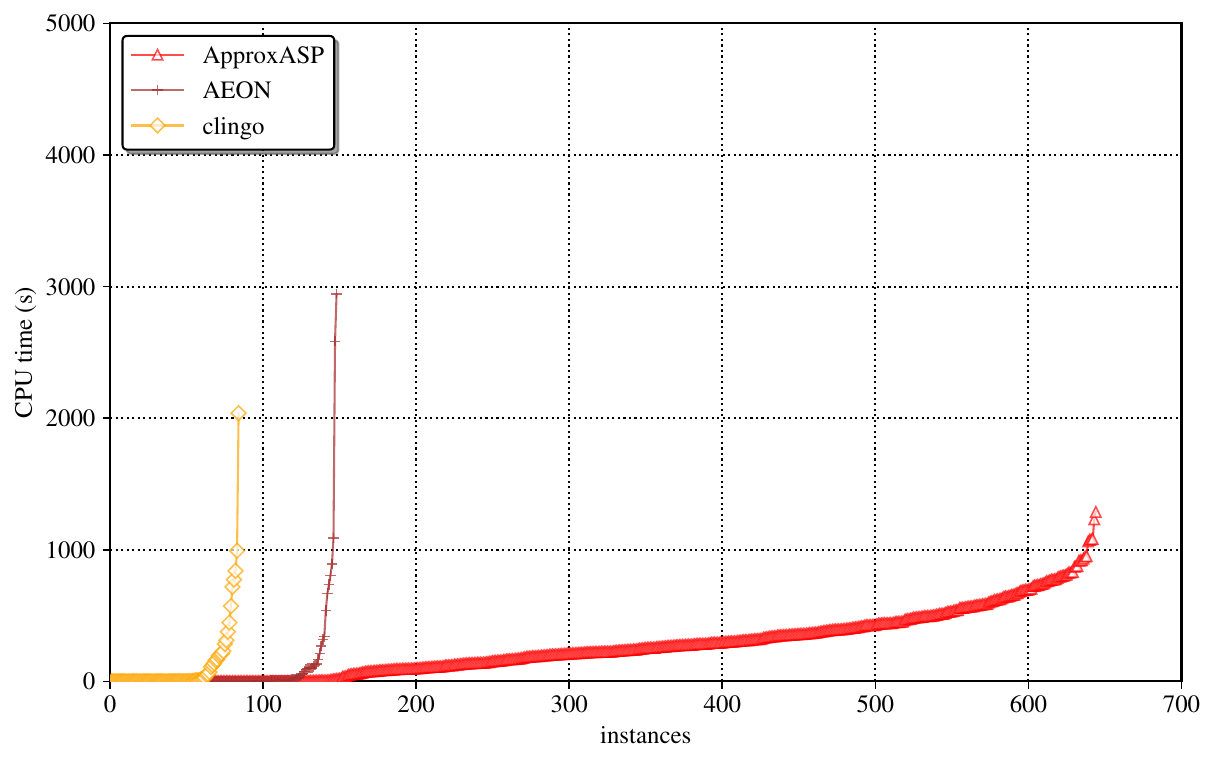}
        \caption{\acthirdmts{}}
        \label{fig:cactus_plots}
    \end{subfigure}
    \begin{subfigure}[t]{0.32\textwidth}
        \centering
        \includegraphics[width=0.99\linewidth]{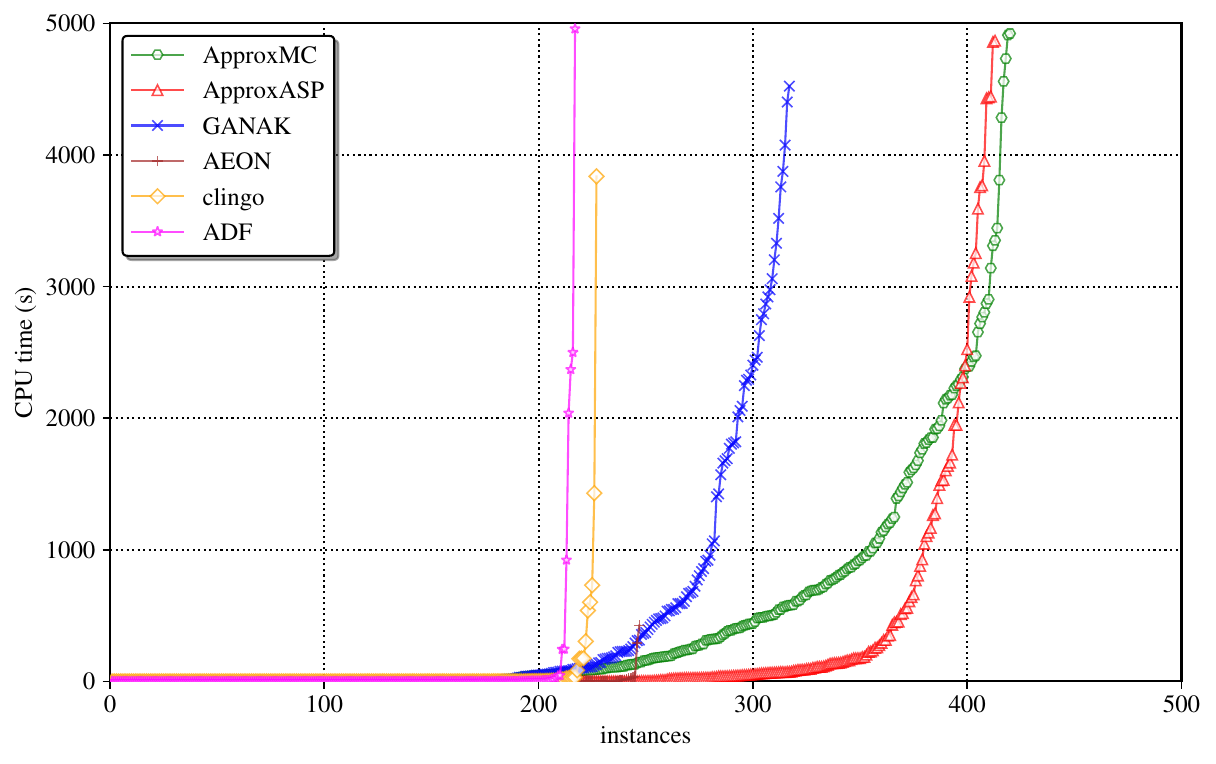}
        \caption{\acfirstfix{}}
        \label{fig:count_fix_1}
    \end{subfigure}
    \begin{subfigure}[t]{0.32\textwidth}
        \centering
        \includegraphics[width=0.99\linewidth]{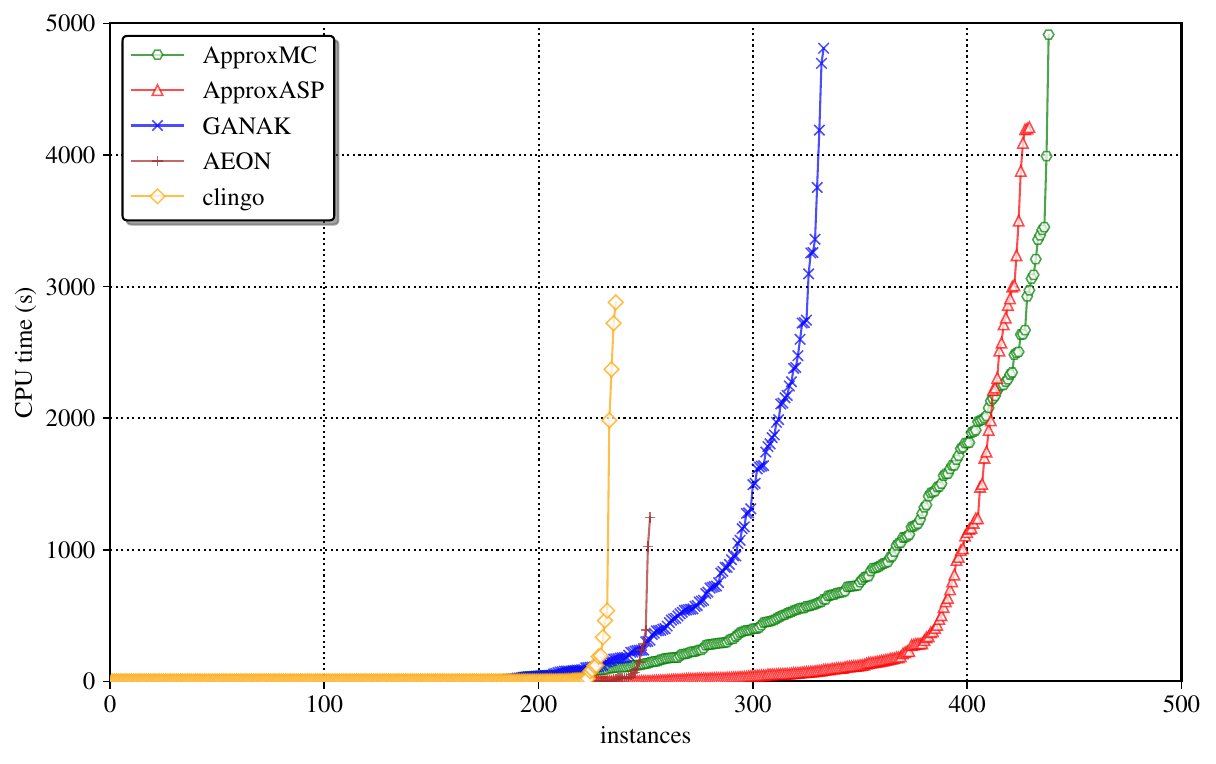}
        \caption{\acsecondfix{}}
        \label{fig:count_fix_2}
    \end{subfigure}
    \begin{subfigure}[t]{0.32\textwidth}
        \centering
        \includegraphics[width=0.99\linewidth]{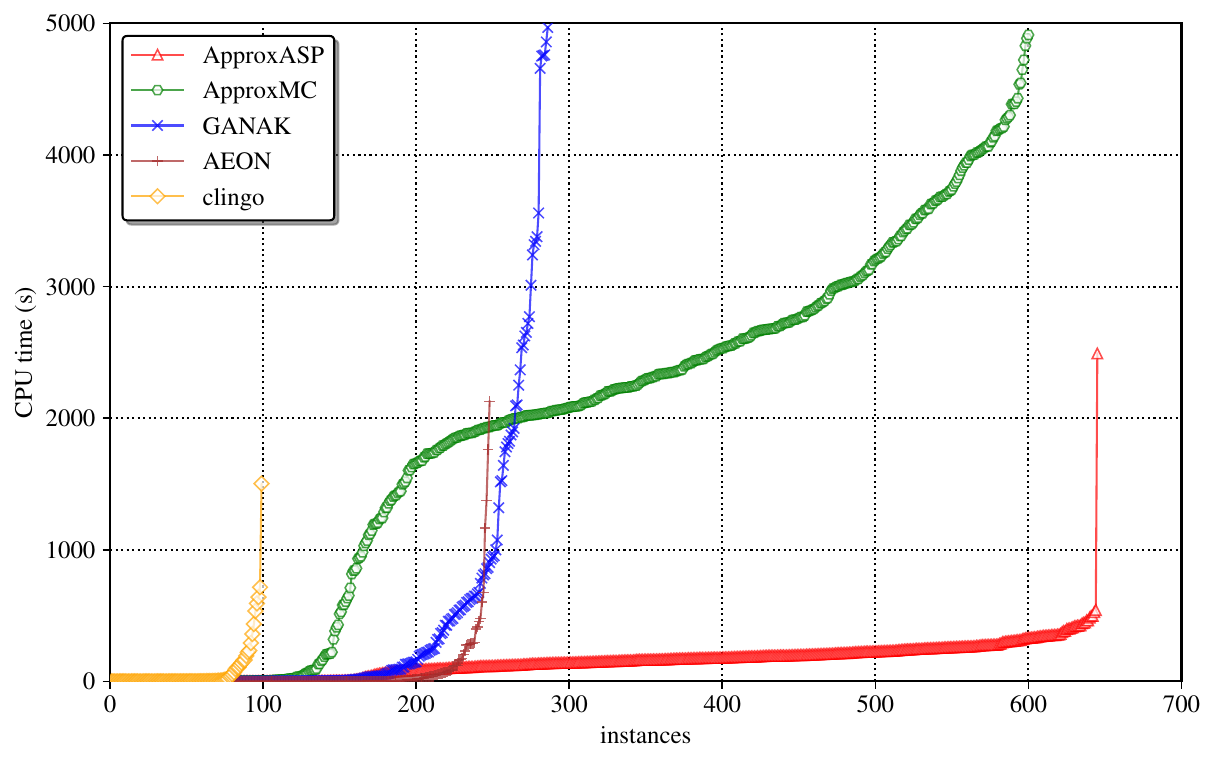}
        \caption{\acthirdfix{}}
        \label{fig:count_fix_3}
    \end{subfigure}
    
    \caption{Performance comparison of different counters across all counting problems. The $x$ axis shows the number of benchmarks completed before the corresponding CPU runtime on the $y$ axis.}
    \label{fig:counting_task_1}
\end{figure*}

\noindent\textbf{\acsecondmts{} and \acsecondfix{}} The results for \acsecondmts{} and \acsecondfix{} are shown in~\Cref{table:c_mts_2_result} and~\Cref{table:c_fix_2_result}, respectively, using the same metric (solved instances and PAR$2$ score) as~\Cref{table:c_mts_1_result} and~\Cref{table:c_fix_1_result}.
Here, the relative performance of individual tools mirrors that observed for \acfirstmts{} and \acfirstfix{}.
However, all tools solved more instances overall, largely due to the inclusion of the phenotype property, which typically reduces the number of solutions.
Since phenotype properties are generally much simpler than the update functions describing network dynamics, we expect them to simplify the counting problem---often substantially.

\begin{table}[h]
    \centering
    \begin{tabular}{m{4em} m{4em} m{4em} m{5em}} 
    \toprule
    & AEON & clingo & ApproxASP\\
    \midrule
    \#Solved & 231 & 308 & \textbf{464}\\
    \midrule
    PAR$2$ & 6428 & 5237 & \textbf{2919}\\
    \bottomrule
    \end{tabular}
    \caption{The performance comparison of different counters on \acsecondmts{} counting problem.}
    \label{table:c_mts_2_result}
\end{table}

\begin{table}[h]
    \centering
    \begin{tabular}{m{4em} m{3em} m{3em} m{4em} m{5em} m{5em}} 
    \toprule
    & AEON & clingo & GANAK & ApproxMC & ApproxASP\\
    \midrule
    \#Solved & 252 & 236 & 333 & \textbf{438} & 429\\
    \midrule
    PAR$2$ & 6099 & 6360 & 5030 & 3527 & \textbf{3499}\\
    \bottomrule
    \end{tabular}
    \caption{The performance comparison of different counters on \acsecondfix{} counting problem.}
    \label{table:c_fix_2_result}
\end{table}

\medskip

\noindent\textbf{\acthirdmts{} and \acthirdfix{}} The results for problems \acthirdmts{} and \acthirdfix{} are presented in~\Cref{table:c_mts_3_result,table:c_fix_3_result}, respectively. 
This benchmark confirms the leading performance of ApproxASP in minimal trap space and fixed point counting as it was able to solve $644/645$ and $645/645$ problem instances for \acthirdmts{} and \acthirdfix{}, respectively.
Here, ApproxASP significantly outperforms even ApproxMC, and outperforms all other tools by a factor of $2\times$ or more.
In contrast to \acsecondmts{} and \acsecondfix{}, where all tools benefited from a reduced number of solutions, the presence of perturbations in this setting generally increases both the number of solutions and the underlying complexity of BNs.
The key performance differentiation lies in the tools' ability to handle \emph{projected counting}.
For this problem, the {\em independent support} for XOR constraints is derived from the BN perturbable variables. 
Since this is only a subset of the network variables, the independent support size is relatively small, reducing the size of the XORs. 
This results in the superior performance of ApproxASP and ApproxMC.
Note that in these countings, the number of perturbable variables is at most $50$ and the count is upper-bounded by $3^{50}$.

\begin{table}[h]
    \centering
    \begin{tabular}{m{4em} m{4em} m{4em} m{5em}} 
    \toprule
    & AEON & clingo & ApproxASP\\
    \midrule
    \#Solved & 148 & 84 & \textbf{644}\\
    \midrule
    PAR$2$ & 7725 & 8711 & \textbf{283}\\
    \bottomrule
    \end{tabular}
    \caption{The performance comparison of different counters on \acthirdmts{} counting problem.}
    \label{table:c_mts_3_result}
\end{table}
\begin{table}[h]
    \centering
    \begin{tabular}{m{4em} m{3em} m{3em} m{4em}  m{5em} m{5em}} 
    \toprule
    & AEON & clingo & GANAK & ApproxMC & ApproxASP\\
    \midrule
    \#Solved & 248 & 99 & 286 & 600 & \textbf{645}\\
    \midrule
    PAR$2$ & 6176 & 8476 & 5757 & 2481 & \textbf{150}\\
    \bottomrule
    \end{tabular}
    \caption{The performance comparison of different counters on \acthirdfix{} counting problem.}
    \label{table:c_fix_3_result}
\end{table}

\subsection{Detailed Runtime Analysis}

The runtime performance of different tools is depicted in~\Cref{fig:counting_task_1}.
In these cactus plots, a point $(x,y)$ indicates that a tool successfully completes $x$ benchmark instances, with each instance taking at most $y$ seconds. 
The plots highlight the superiority of the hashing-based counting techniques, ApproxASP and ApproxMC. 
Notably, even in cases where ApproxASP solves fewer instances than ApproxMC (e.g. \acfirstfix{}), it is typically faster on simpler problem instances, which is also reflected in its PAR$2$ score.

By examining the number of solutions successfully computed for different tasks, we observe that only ApproxASP, ApproxMC, and GANAK can reliably count instances having a large number of solutions (e.g. $\geq 10^{30}$). 
Here, BDD-based counters like AEON perform somewhat better on fixed point problems compared to tools using plain enumeration (ADF, clingo), but cannot compete in the (arguably more complex) trap space problems.

The performance of GANAK and ApproxMC is also severely affected by the time required to compute their input CNF formulas (ref. \Cref{sec:related-work}). 
Here, deriving the CNF problem representation is often considerably more time-consuming than computing the comparable ASP encoding.
Our results reveal that, on average, it took about $545$ seconds to compute the CNF formula for each BN.
Moreover, for $39$ BNs, the corresponding CNF could not be computed within the $5000$-second timeout.
In contrast, the ASP encodings for all BNs—including the more challenging ones with perturbable variables—were generated within seconds.
This demonstrates the superior flexibility of the ASP-based approach.

Note that the approximate counters ApproxMC and ApproxASP provide an $(\varepsilon, \delta)$-guarantee. 
Thus for each solved instance, we computed the {\em observed tolerance}, which is defined as $\mathsf{max}(\sfrac{\cnt}{\Card{\as{P}}}, \sfrac{\Card{\as{P}}}{\cnt}) - 1$, where $\cnt$ is the count returned by ApproxASP or ApproxMC, and $\as{P}$ denotes the answer set count of program $P$.
On average, ApproxMC and ApproxASP exhibit observed tolerances of $0.032$ and $0.007$, respectively.
The maximum observed tolerances were $0.39$ for ApproxASP and $0.07$ for ApproxMC, both of which are well below the theoretical bound of $\varepsilon = 0.8$.

We evaluated ApproxMC and ApproxASP with a tighter guarantee, setting $\varepsilon = 0.01$ and $\delta = 0.05$.
In these setting, ApproxASP solved $244, 243, 114, 258, 259, 130$ for \acfirstmts{}, \acsecondmts{}, \acthirdmts{}, \acfirstfix{}, \acsecondfix{}, and \acthirdfix{}, respectively, 
while ApproxMC solved $263, 259, 120$ for \acfirstfix{}, \acsecondfix{}, and \acthirdfix{}, respectively. 
Consequently, the higher precision significantly reduces the number of solved instances. 

We compared the size of instances solved by different counting techniques across various counting problems. 
We observed that, for minimal trap spaces, ApproxASP solved instances with up to $4000$ variables, while \tsconj~and AEON solved instances with up to $321$ variables.
For fixed points, although there were large instances (up to $4000$ variables), these did not contain fixed points, making a direct comparison unfeasible.

\section{Conclusion}\label{sec:conclusion}

This paper addresses the problem of counting minimal trap spaces and fixed points in Boolean networks.
These are critical concepts in understanding of long-term BN behavior and are relevant across a diverse set of application domains, including systems biology, abstract argumentation, and logic programming. Trap space counting is especially important in systems biology: due to the inherent robustness of biological phenomena, biologically motivated BNs admit a high degree of redundancy, resulting in a vast repertoire of closely related trap spaces (or fixed points) that cannot be explored solely through enumeration.

Here, we propose novel methods for determining trap space and fixed point counts using approximate answer set counting, thus entirely avoiding costly enumeration. We apply this methodology to three biologically motivated problems: (a) general counting; (b) counting occurrences of a known biological phenotype, and (c) projected counting of perturbations that ensure the emergence of a known biological phenotype. The last problem is particularly timely, as it allows us to determine \emph{perturbation robustness}~\cite{BBPSS2023,kitano2007towards}, a vital measure that determines how stable a phenotype appears under external stimuli. Through extensive experiments on a diverse set of benchmarks, we show that approximate counting substantially improves the feasibility of counting in this domain, outperforming traditional enumeration-based and exact approaches whenever applicable.

Our work opens several promising directions for future research:
First is to integrate reduction techniques---previously shown to be effective in BN analysis~\cite{Rozum2021,ST23,TP2024}---to further improve counting accuracy and scalability. 
Second direction explores hybrid strategies that combine exact and approximate counting, aiming to strike a balance between efficiency and precision. 
Finally, a deeper investigation into the computational complexity of the counting problems would help refine our understanding of their theoretical underpinnings.

\bibliography{cp25}
\clearpage
\appendix

\section{Details of Proofs}\label{sec:detailed-proofs}

\rcmtscorrectness*
\begin{proof}
	To prove the correctness of~\Cref{theo:encoding-correctness-mts}, we reuse the correctness proof of Theorem $2$ of~\cite{TBPS2024}, which establishes that the answer sets of \(\tsconjp{f}\) one-to-one correspond to the minimal trap spaces of $\bn$.
	Let us recall the translation between an answer set \(M\) of \(\tsconjp{f}\) and its respective minimal trap space \(m\) of \(f\): for each variable $v \in \var{\bn}$, \(m(v) = 1\) if and only if \(\pos{v} \in M \land \ngt{v} \not \in M\), \(m(v) = 0\) if and only if \(\pos{v} \not \in M \land \ngt{v} \in M\), and \(m(v) = \star\) if and only if \(\pos{v} \in M \land \ngt{v} \in M\).

	We employ the theories of faceted answer set navigation~\cite{FGR2022} to prove the correctness of~\Cref{theo:encoding-correctness-mts}.
	According to faceted navigation, for a program $P$ and an atom $a \in \at{P}$,  adding the integrity constraint $\{\bot \leftarrow a\}$ to program $P$ restricts the search space of $P$, where no answer set contains the atom $a$.
	Conversely, adding the integrity constraint $\{\bot \leftarrow \dng{a}\}$ to program $P$ ensures that every answer set in the modified program contains the atom $a$.

	We prove the correctness of our encoding for each trait $(v \leftrightarrow e) \in \phen$.
	We show that~\Cref{alg:to_asp} selects {\em facets} in such a way that the answer sets of $\tsconjp{f} \cup \toaspalgname{\phen}$ one-to-one correspond to the minimal trap spaces satisfying $\phen$ of \(f\).
	When $(v \leftrightarrow 1) \in \phen$ (line~\ref{line:true_value} in~\Cref{alg:to_asp}), two constraints are added to $\toaspalgname{\phen}$ to ensure that every answer set contains the atom $\pos{v}$ and none contains the atom $\ngt{v}$.
	These constraints effectively assign the value $1$ to variable $v$.
	When $(v \leftrightarrow 0) \in \phen$ (line~\ref{line:false_value} in~\Cref{alg:to_asp}), 
	two added constraints ensure that every answer set contains the atom $\ngt{v}$ and none contains $\pos{v}$, thereby assigning the value $0$ to variable $v$.
	When $(v \leftrightarrow \star) \in \phen$ (line~\ref{line:star_value} in~\Cref{alg:to_asp}), 
	two added constraints ensure that every answer set contains both $\pos{v}$ and $\ngt{v}$. 
	These constraints effectively assign the value $\star$ to variable $v$.

	\noindent Combining all the cases of~\Cref{alg:to_asp}, we can claim the correctness of our encoding.
\end{proof}

\rcfixcorrectness*
\begin{proof}
	The proof technique of~\Cref{theo:encoding-correctness-mts} can be similarly extended for fixed point counting with the program $\faspp{f}$.
\end{proof}

To prove Theorems~\ref{theo:correctness-perturbation-reduction-MTS} and~\ref{theo:correctness-perturbation-reduction-FIX}, we prepare the following preliminaries.

\begin{definition}\label{def:min-t-po}
	The total order \(\leq_t\) on \(\threed\) is defined by \(0 <_t \star <_t 1\).
\end{definition}

\begin{definition}\label{def:min-s-po}
	The partial order \(\leq_s\) on \(\threed\) is defined by \(0 <_s \star\), \(1 <_s \star\), and it contains no other relation.
\end{definition}

\begin{definition}\label{def:three-valued-evaluation}
	Consider a BN \(\bn\), a sub-space \(m\) of \(\bn\), and a Boolean expression \(e\) over \(\var{\bn}\).
	The value of \(e\) under sub-space \(m\) w.r.t. the Kleene three-valued logic, denoted as \(m(e)\), is recursively defined as follows:
	\begin{align*}
		m(e) = \begin{cases}
			e &\text{if } e \in \threed\\
			m(a) &\text{if } e = a, a \in \var{\bn}\\
			\neg m(e_1) &\text{if } e = \neg e_1\\
			\text{min}_{\leq_t}(m(e_1), m(e_2)) &\text{if } e = e_1 \land e_2\\
			\text{max}_{\leq_t}(m(e_1), m(e_2)) &\text{if } e = e_1 \lor e_2
		\end{cases}
	\end{align*} where \(\neg 1 = 0, \neg 0 = 1, \neg \star = \star\), and \(\text{min}_{\leq_t}\) (resp.\ \(\text{max}_{\leq_t}\)) is the function to get the minimum (resp.\ maximum) value of two values w.r.t.\ the order \(\leq_t\).
\end{definition}

\begin{theorem}[Theorem 1 of~\cite{HAH2015}]\label{theo:trap-space-char}
	Consider a BN \(f\) and a sub-space \(m\) of \(f\).
	A sub-space \(m\) is a trap space of \(f\) iff \(m(f_v) \leq_s m(v)\) for every \(v \in \var{f}\).
\end{theorem}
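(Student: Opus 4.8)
The plan is to prove the equivalence by first translating the semantic notion of a trap space into a pointwise condition on the states of $\mathcal{S}[m]$, and then connecting that condition to the Kleene evaluation $m(f_v)$ through a soundness lemma for the recursive definition in~\Cref{def:three-valued-evaluation}. Since trap spaces are independent of the update scheme~\cite{HAH2015}, I would work with the fully asynchronous (single-variable-flip) view. The first step is the reformulation: $m$ is a trap space iff for every variable $v$ with $m(v) \neq \star$, one has $f_v(s) = m(v)$ for all $s \in \mathcal{S}[m]$. Its forward half is immediate, since if some fixed $v$ had $f_v(s) \neq m(v)$ at some $s \in \mathcal{S}[m]$, then flipping $v$ would yield an asynchronous transition that leaves $\mathcal{S}[m]$; its backward half holds because a fixed variable can then never be flipped out, while a free variable is unconstrained.

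The second ingredient is a Kleene soundness lemma, proved by structural induction on the expression $e$ along~\Cref{def:three-valued-evaluation}: if $m(e) = c$ with $c \in \twod$, then $e(s) = c$ for every $s \in \mathcal{S}[m]$. The base cases (constants and literals) are direct, and the inductive cases for $\neg,\land,\lor$ follow from the definitions of $\mathrm{min}_{\leq_t}$ and $\mathrm{max}_{\leq_t}$ together with $0 <_t \star <_t 1$. With this lemma the direction ($\Leftarrow$) is quick: assuming $m(f_v) \leq_s m(v)$ for all $v$, any fixed $v$ forces $m(f_v) = m(v) \in \twod$, so soundness gives $f_v \equiv m(v)$ on $\mathcal{S}[m]$, which is exactly the pointwise condition; free variables impose nothing, so $m$ is a trap space by the reformulation.

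For the direction ($\Rightarrow$), the reformulation yields $f_v \equiv m(v)$ on $\mathcal{S}[m]$ for each fixed $v$, and I must deduce $m(f_v) = m(v)$; for free $v$ the inequality holds trivially since $\star$ is the top element of $\leq_s$. This is the converse of the soundness lemma and is the main obstacle: on an arbitrary expression Kleene evaluation is only conservative, so a formula constant on $\mathcal{S}[m]$ may still evaluate to $\star$ (e.g.\ $a \land \neg a$ under a free $a$), which would break the claim. I would resolve this exactly as in~\cite{HAH2015} by taking each $f_v$ in its prime-implicant (complete-DNF) representation and proving the matching completeness property: $m(f_v) = c$ iff $f_v \equiv c$ on $\mathcal{S}[m]$ for $c \in \twod$. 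The key observation is that a subcube $\mathcal{S}[m]$ contained in the on-set of $f_v$ is itself an implicant, hence covered by some prime implicant whose literals are all fixed by $m$ (so the DNF evaluates to $1$), and dually a subcube avoiding the on-set is inconsistent with every prime implicant (so the DNF evaluates to $0$); consequently $m(f_v) = \star$ forces $f_v$ to take both values on $\mathcal{S}[m]$. Establishing this completeness lemma for the canonical form is the crux of the whole argument, after which the remaining bookkeeping—combining the two directions and the trivial free-variable case—is routine.
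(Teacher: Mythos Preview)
The paper does not provide its own proof of \Cref{theo:trap-space-char}: the result is quoted from~\cite{HAH2015} and used as a black box in the proofs of \Cref{lem:transformed-BN-mts-fixed-value} and \Cref{lem:trap-space-projection}. There is consequently nothing in the paper to compare your argument against.

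That said, your reconstruction is sound and you correctly isolate the one genuine subtlety. The reformulation of the trap-space condition and the Kleene soundness lemma dispatch the $(\Leftarrow)$ direction cleanly. For $(\Rightarrow)$ you rightly observe that the recursive evaluation of \Cref{def:three-valued-evaluation} is \emph{syntactic} and therefore only conservative on arbitrary expressions (your $a \land \neg a$ example is exactly the failure mode), so completeness needs a canonical form. The prime-implicant argument you sketch is correct: if $\mathcal{S}[m]$ lies inside the on-set of $f_v$ then the conjunction of $m$'s fixed literals is itself an implicant, hence dominated by some prime implicant all of whose literals are fixed by $m$, giving $m(f_v)=1$; dually, if $\mathcal{S}[m]$ avoids the on-set then every prime implicant contains a literal contradicted by a fixed value of $m$, giving $m(f_v)=0$. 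One caveat worth stating explicitly in your write-up: this means the theorem, read literally through \Cref{def:three-valued-evaluation} applied to whatever expression encodes $f_v$, holds only under that representational convention; the concrete expressions used in the appendix ($g_v$, $g_{v^k}$, $g_{v^o}$) happen to avoid the issue.
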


\begin{corollary}\label{cor:mts-char}
	Consider a BN \(\bn\) and a sub-space \(m\) of \(\bn\).
	A sub-space \(m\) is a minimal trap space of \(\bn\) iff \(m\) is a \(\leq_s\)-minimal trap space of \(\bn\).
\end{corollary}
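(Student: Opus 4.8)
The plan is to reduce the statement to a purely order-theoretic fact: the subset ordering on the state sets $\mathcal{S}[\cdot]$ coincides with the pointwise extension of $\leq_s$ to sub-spaces. Once this is in hand, the equivalence of the two minimality notions is immediate, and in fact Theorem~\ref{theo:trap-space-char} is not needed for this corollary, since the argument applies verbatim to any family of sub-spaces (here, the family of trap spaces of $\bn$).

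First I would make the meaning of ``$\leq_s$-minimal'' precise by extending $\leq_s$ from $\threed$ to sub-spaces of $\bn$ pointwise: for sub-spaces $m, m'$, declare $m' \leq_s m$ iff $m'(v) \leq_s m(v)$ for every $v \in \var{\bn}$. Then I would prove the key lemma: $\mathcal{S}[m'] \subseteq \mathcal{S}[m]$ if and only if $m' \leq_s m$. For the ``if'' direction, note that $m' \leq_s m$ forces $m'(v) = m(v)$ at every $v$ with $m(v) \neq \star$, because the only element of $\threed$ that is $\leq_s$ a Boolean value is that value itself; hence every state agreeing with $m'$ on the fixed variables of $m'$ also agrees with $m$ on the fixed variables of $m$, giving $\mathcal{S}[m'] \subseteq \mathcal{S}[m]$. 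For the ``only if'' direction, assume $\mathcal{S}[m'] \subseteq \mathcal{S}[m]$ and fix $v \in \var{\bn}$: if $m(v) = \star$ then $m'(v) \leq_s \star$ trivially, and if $m(v) = b \in \twod$ then every state of the (nonempty) set $\mathcal{S}[m']$ has $v = b$, which is only possible when $m'(v) = b$, so again $m'(v) \leq_s m(v)$.

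Next I would observe that two distinct sub-spaces over $\var{\bn}$ determine distinct state sets, so $\mathcal{S}[m'] = \mathcal{S}[m]$ iff $m' = m$; combining this with the lemma yields $\mathcal{S}[m'] \subsetneq \mathcal{S}[m]$ iff $m' <_s m$ in the strict pointwise order. Finally I would conclude: by definition $m$ is a minimal trap space iff there is no trap space $m'$ with $\mathcal{S}[m'] \subsetneq \mathcal{S}[m]$, which by the previous equivalence holds iff there is no trap space $m'$ with $m' <_s m$, i.e., iff $m$ is a $\leq_s$-minimal element of the set of trap spaces of $\bn$.

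I do not expect a serious obstacle here; the proof is elementary. The only points that require care are reading ``$\leq_s$-minimal'' as minimality in the pointwise product order on sub-spaces, and handling the boundary cases of the lemma correctly — in particular the $m(v) = \star$ case and the (harmless) use of the fact that $\mathcal{S}[m']$ is nonempty for any sub-space $m'$ over the finite set $\var{\bn}$.
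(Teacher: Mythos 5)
Your proof is correct. Note, however, that the paper does not actually supply a proof of this corollary: it is stated without argument immediately after Theorem~\ref{theo:trap-space-char}, and the pointwise extension of \(\leq_s\) to sub-spaces is only used implicitly later (e.g., in the proofs of Lemmas~\ref{lem:transformed-BN-mts-fixed-value} and~\ref{lem:trap-space-projection}, where \(m' <_s m\) is invoked without a formal definition). Your write-up fills exactly the gap the paper leaves open: you make the product order explicit, prove the key equivalence \(\mathcal{S}[m'] \subseteq \mathcal{S}[m] \iff m' \leq_s m\) with the two delicate points handled correctly (the only elements \(\leq_s\) a Boolean value \(b\) is \(b\) itself, and \(\mathcal{S}[m'] \neq \emptyset\) is needed to rule out \(m'(v) = 1-b\) in the converse direction), and note the injectivity of \(m \mapsto \mathcal{S}[m]\) so that proper inclusion corresponds to strict order. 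Your observation that Theorem~\ref{theo:trap-space-char} is not logically needed is also accurate --- the equivalence of the two minimality notions holds for any family of sub-spaces, and the theorem only serves to identify which sub-spaces belong to that family. No gaps.
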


\begin{corollary}\label{cor:fix-point-char}
	Consider a BN \(\bn\) and a sub-space \(m\) of \(\bn\).
	A sub-space \(m\) is a fixed point of \(\bn\) iff \(m\) is a trap space of \(\bn\) and \(m(v) \neq \star\) for every \(v \in \var{\bn}\).
\end{corollary}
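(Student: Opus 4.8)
The plan is to derive the corollary from the trap-space characterisation of \Cref{theo:trap-space-char}, using one elementary observation: on a sub-space in which no variable is free, the Kleene three-valued evaluation of \Cref{def:three-valued-evaluation} coincides with ordinary two-valued evaluation at the corresponding state.

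The first step is an auxiliary claim. Suppose $m(v) \in \twod$ for every $v \in \var{\bn}$, and let $s$ be the unique state with $\mathcal{S}[m] = \{s\}$ (so $s_v = m(v)$). Then for every Boolean expression $e$ over $\var{\bn}$ we have $m(e) = e(s) \in \twod$, where $e(s)$ denotes the usual two-valued value of $e$ at $s$. This is proved by structural induction on $e$ along the clauses of \Cref{def:three-valued-evaluation}: the base cases ($e$ a Boolean constant, or $e$ a variable) are immediate, and in the inductive cases one uses that $\neg$, $\mathrm{min}_{\leq_t}$, and $\mathrm{max}_{\leq_t}$ restricted to $\twod$ are exactly Boolean negation, conjunction, and disjunction. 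The second step records a trivial fact about \Cref{def:min-s-po}: restricted to $\twod \times \twod$, the relation $\leq_s$ is the identity, since $0 \leq_s 0$ and $1 \leq_s 1$ hold, whereas neither $0 \leq_s 1$ nor $1 \leq_s 0$ holds.

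With these in hand the two directions are short. For the forward direction, assume $m$ is a fixed point of $\bn$: then $\mathcal{S}[m] = \{s\}$ for a state $s$ with $f_v(s) = s_v$ for all $v$, hence every variable is fixed in $m$; by the auxiliary claim $m(f_v) = f_v(s) = s_v = m(v)$ for every $v$, so $m(f_v) \leq_s m(v)$ holds by reflexivity and \Cref{theo:trap-space-char} gives that $m$ is a trap space. For the converse, assume $m$ is a trap space with $m(v) \neq \star$ for all $v$, and let $s$ be the unique state in $\mathcal{S}[m]$. By \Cref{theo:trap-space-char}, $m(f_v) \leq_s m(v)$ for every $v$; by the auxiliary claim $m(f_v) = f_v(s) \in \twod$ and $m(v) = s_v \in \twod$; and since $\leq_s$ is the identity on $\twod$, we get $f_v(s) = s_v$ for every $v$, i.e.\ $s$ is a fixed point of $\bn$.

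I do not anticipate a real obstacle: the statement is essentially an unfolding of the definitions modulo \Cref{theo:trap-space-char}. The one point meriting care is the auxiliary claim---being explicit that Kleene and Boolean evaluation agree on fully-defined sub-spaces---together with the reminder that ``fixed point'' is to be read as the dynamical condition $f(s) = s$ on a single state (which is, in particular, update-scheme independent), rather than circularly as ``trap space in which every variable is fixed''; it is precisely this reading that gives the corollary its content.
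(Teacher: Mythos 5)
Your proof is correct. The paper states this corollary without proof, treating it as an immediate consequence of \Cref{theo:trap-space-char}; your argument --- that Kleene evaluation coincides with ordinary Boolean evaluation on fully fixed sub-spaces and that \(\leq_s\) restricted to \(\twod\) is equality, so the trap-space condition \(m(f_v) \leq_s m(v)\) collapses to \(f_v(s) = s_v\) --- is exactly the reasoning left implicit there, and you are right to note that the statement only has content under the dynamical reading of ``fixed point'' (a singleton attractor, i.e.\ \(f(s) = s\)) rather than the circular reading as ``trap space with all variables fixed''.
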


Now, we show the formal proof of Theorem~\ref{theo:correctness-perturbation-reduction-MTS}.

\begin{lemma}\label{lem:transformed-BN-mts-fixed-value}
	Consider a BN \(f\) and a set of perturbable variables \(\pert \subseteq \var{f}\).
	Let \(g\) be the BN obtained from \(f\), according to~\Cref{def:BN-perturbation-trans}.
	Let \(\proj \) be the set \( \bigcup_{v \in \pert}\{v^k, v^o\}\).
	If \(m\) is a minimal trap space of \(g\), then \(m(v) \neq \star\) for every \(v \in \proj\).
\end{lemma}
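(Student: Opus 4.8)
The plan is to prove the lemma by contradiction: assuming $m$ is a minimal trap space of $g$ with $m(v) = \star$ for some $v \in \proj$, I would construct a trap space $m'$ of $g$ with $\mathcal{S}[m'] \subsetneq \mathcal{S}[m]$, contradicting minimality (equivalently, $\leq_s$-minimality, by Corollary~\ref{cor:mts-char}). The single tool I need beyond what is already in the excerpt is monotonicity of the Kleene three-valued evaluation of Definition~\ref{def:three-valued-evaluation} with respect to the information order $\leq_s$ of Definition~\ref{def:min-s-po}: whenever $m'(x) \leq_s m(x)$ for all $x \in \var{g}$, one has $m'(e) \leq_s m(e)$ for every Boolean expression $e$ over $\var{g}$. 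I would either invoke this as a standard fact about Kleene's strong three-valued logic or establish it by a short structural induction on $e$ (each of $\neg, \land, \lor$ being monotone w.r.t.\ $\leq_s$).

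Given that, here is the construction. Suppose $m$ is a minimal trap space of $g$ and $v \in \proj$ has $m(v) = \star$; define $m'$ by $m'(v) = 0$ and $m'(x) = m(x)$ for $x \neq v$, so that $m' \leq_s m$ pointwise (as $0 <_s \star$). I would verify $m'$ is a trap space using Theorem~\ref{theo:trap-space-char}, i.e.\ $m'(g_x) \leq_s m'(x)$ for each $x \in \var{g}$. For $x \neq v$, monotonicity gives $m'(g_x) \leq_s m(g_x)$, and $m(g_x) \leq_s m(x) = m'(x)$ since $m$ is a trap space, so the inequality holds. For $x = v$ I would use the explicit form of $g_v$ from Definition~\ref{def:BN-perturbation-trans}: if $v = w^k$ for some $w \in \pert$, then $g_v = w^k = v$, hence $m'(g_v) = m'(v)$; if $v = w^o$ for some $w \in \pert$, then $g_v = w^o \land \neg w^k$, and since $m'(w^o) = 0$ is the $\leq_t$-minimum, $m'(g_v) = \min_{\leq_t}(0, m'(\neg w^k)) = 0 = m'(v)$. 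Either way $m'(g_v) \leq_s m'(v)$, so $m'$ is a trap space of $g$. Since $m'$ agrees with $m$ on every variable fixed in $m$ and additionally fixes $v$ (free in $m$), it follows that $\mathcal{S}[m'] \subsetneq \mathcal{S}[m]$, contradicting the minimality of $m$; hence $m(v) \neq \star$ for every $v \in \proj$.

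The only delicate step is the monotonicity claim, which is not stated in the excerpt; beyond that the argument is mechanical, driven entirely by the concrete update rules of Definition~\ref{def:BN-perturbation-trans} (the underlying point being that $w^k$ is a source variable of $g$, and $w^o$ already evaluates to $0$ in its own image once $w^o$ itself is set to $0$). A minor thing to watch is that changing $m(w^k)$ does affect the images $g_w$ and $g_{w^o}$, but these are covered uniformly by the $x \neq v$ case via monotonicity, so no separate bookkeeping is needed.
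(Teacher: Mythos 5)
Your proof is correct, and it follows the same overall strategy as the paper's: assume $m(v) = \star$ for some $v \in \proj$, refine $m$ to $m'$ by setting that coordinate to $0$, show $m'$ is still a trap space of $g$, and contradict $\leq_s$-minimality (Corollary~\ref{cor:mts-char}). The difference lies in how the trap space property of $m'$ is discharged. The paper splits from the outset into the cases $v = w^k$ and $v = w^o$ and, in each, explicitly recomputes $m'(g_x)$ for every variable $x$ whose update function mentions $v$ (namely $w^o$ and $w$ in the first case, $w$ in the second), each time extracting the required inequality from $m(g_x) \leq_s m(x)$ by a small ad hoc argument about $\min_{\leq_t}$ and $\max_{\leq_t}$ applied to $\star$. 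You instead invoke monotonicity of the Kleene three-valued evaluation with respect to the information order $\leq_s$, which handles every $x \neq v$ uniformly in one line ($m'(g_x) \leq_s m(g_x) \leq_s m(x) = m'(x)$), leaving only the trivial direct check at $x = v$ itself, where $g_{w^k} = w^k$ and $g_{w^o} = w^o \land \neg w^k$ both evaluate to $0$ once the relevant variable is set to $0$. The monotonicity lemma is genuinely an extra ingredient --- it appears nowhere in the paper, so you do owe the reader its (short and standard) structural induction over $\neg$, $\land$, $\lor$ --- but I checked it and it holds for the paper's definitions of $\leq_s$ and $\leq_t$. Once it is in place, your argument is shorter, avoids the two-case split, and subsumes the paper's ``$v^k$ only affects $v^o$ and $v$'' locality reasoning; the paper's version, by contrast, is entirely self-contained. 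Both proofs are valid.
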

\begin{proof}
	Let \(m\) be a minimal trap space of \(g\).
	Assume that there exists \(v^k \in \proj\) such that \(m(v^k) = \star\) or  \(v^o \in \proj\) such that \(m(v^o) = \star\).
	We consider two cases as follows.
	
	\textbf{Case 1}: \(m(v^k) = \star\).
	Let \(m'\) be a sub-space of \(g\) such that \(m'(u) = m(u)\), for every \(u \in \var{g} \setminus \{v^k\}\) and \(m'(v^k) = 0\).
	We have \(m'(g_{v^k}) = m'(v^k)\).
	The variable \(v^k\) only affects \(v^o\) and \(v\).
	Regarding \(v^o\), we have \(m'(g_{v^o}) = m'(v^o \land \neg v^k) = \text{min}_{\leq_t}(m'(v^o), \neg m'(v^k)) = \text{min}_{\leq_t}(m'(v^o), 1) = m'(v^o)\).
	Regarding \(v\), we have \(m'(g_v) = m'(\neg v^k \land (v^o \lor f_v)) = \text{min}_{\leq_t}(\neg m'(v^k), \text{max}_{\leq_t}(m'(v^o), m'(f_v))) = \text{min}_{\leq_t}(1, \text{max}_{\leq_t}(m'(v^o), m'(f_v))) = \text{max}_{\leq_t}(m'(v^o),\allowbreak m'(f_v)) =  \text{max}_{\leq_t}(m(v^o), m(f_v))\).
	Following~\Cref{theo:trap-space-char}, \(m(g_v) \leq_s m(v)\).
	We have \(m(g_v) = m(\neg v^k \land (v^o \lor f_v)) = \text{min}_{\leq_t}(\neg m(v^k), \text{max}_{\leq_t}\allowbreak(m(v^o), m(f_v))) = \text{min}_{\leq_t}(\star, \text{max}_{\leq_t}(m(v^o), m(f_v)))\).
	Since \(m(v) <_s \text{max}_{\leq_t}(m(v^o), m(f_v))\) implies \(m(v) <_s m(g_v)\) which is a contradiction, we derive that \(\text{max}_{\leq_t}(m(v^o), m(f_v)) \leq_s m(v)\).
	This implies \(m'(g_v) \leq_s m(v) = m'(v)\).
	For any \(u \in \var{g} \setminus \{v, v^k, v^o\}\), \(m'(g_u) = m(g_u) \leq_s m(u) = m'(u)\).
	Hence, \(m'\) is trap space of \(g\) and \(m' <_s m\), which is a contradiction.
	
	\textbf{Case 2}: \(m(v^o) = \star\).
	Let \(m'\) be a sub-space of \(g\) such that \(m'(u) = m(u)\) for every \(u \in \var{g} \setminus \{v^o\}\) and \(m'(v^o) = 0\).
	We have \(m'(g_{v^o}) = m'(v^o \land \neg v^k) = \text{min}_{\leq_t}(m'(v^o), \neg m'(v^k)) = \text{min}_{\leq_t}(0, \neg m'(v^k)) = 0 = m'(v^o)\).
	The variable \(v^o\) only affects \(v\).
	We have \(m'(g_v) = m'(\neg v^k \land (v^o \lor f_v)) = \text{min}_{\leq_t}(\neg m'(v^k), \text{max}_{\leq_t}(m'(v^o), m'(f_v))) = \text{min}_{\leq_t}(\neg m'(v^k), \text{max}_{\leq_t}(0, m'(f_v))) = \text{min}_{\leq_t}(\neg m'(v^k), m'(f_v)) = \text{min}_{\leq_t}(\neg m(v^k), m(f_v))\).
	Following~\Cref{theo:trap-space-char}, \(m(g_v) \leq_s m(v)\).
	We have \(m(g_v) = m(\neg v^k \land (v^o \lor f_v)) = \text{min}_{\leq_t}(\neg m(v^k), \text{max}_{\leq_t}(m(v^o), m(f_v))) = \text{min}_{\leq_t}(\neg m(v^k),\allowbreak \text{max}_{\leq_t}(\star, m(f_v)))\).
	Suppose that \(m'(v) = m(v) <_s \text{min}_{\leq_t}(\neg m(v^k), m(f_v))\).
	If \(m(f_v) = 0\), then \(m(v) <_s 0\) which contradicts to the definition of \(\leq_s\).
	Hence \(m(f_v) \neq 0\), leading to \(\text{max}_{\leq_t}(\star, m(f_v)) = m(f_v)\).
	Then \(m(v) <_s \text{min}_{\leq_t}(\neg m(v^k), m(f_v)) = m(g_v)\), which is a contradiction.
	Hence, \(\text{min}_{\leq_t}(\neg m(v^k), m(f_v)) \leq_s m'(v)\).
	This implies that \(m'(g_v) \leq_s m'(v)\).
	For any \(u \in \var{g} \setminus \{v, v^o\}\), \(m'(g_u) = m(g_u) \leq_s m(u) = m'(u)\).
	Hence, \(m'\) is trap space of \(g\) and \(m' <_s m\), which is a contradiction.
	
	Combining Case $1$ and Case $2$, we can conclude that for \(m(v) \neq \star\) for every \(v \in \proj\).
\end{proof}

\begin{lemma}\label{lem:trap-space-projection}
	Consider a BN \(f\) and a set of perturbable variables \(\pert \subseteq \var{f}\).
	Let \(g\) be the BN obtained from \(f\), according to~\Cref{def:BN-perturbation-trans} and \(\proj\) be the set \(\bigcup_{v \in \pert}\{v^k, v^o\}\).
	Let \(m\) be a trap space of \(g\) such that \(m(v) \neq \star\) for every \(v \in \proj\).
	Let \(\perm\) be the perturbation of \(f\) such that \(\perm(v) = 1\) if and only if \(m(v^k) = 0\) and \(m(v^o) = 1\), \(\perm(v) = 0\) if and only if \(m(v^k) = 1\) and \(m(v^o) = 0\), and \(\perm(v) = \star\) if and only if \(m(v^k) = 0\) and \(m(v^o) = 0\).
	Then \(m'\) is a trap space of \(f^{\perm}\) where \(m'(v) = m(v)\) for every \(v \in \var{f}\).
\end{lemma}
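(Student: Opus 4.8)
The plan is to characterize trap spaces of the perturbed network through \Cref{theo:trap-space-char}. Since $\var{f^{\perm}} = \var{f}$ and $m'$ is literally the restriction of $m$ to $\var{f}$, it suffices to establish $m'(f^{\perm}_v) \leq_s m'(v)$ for every $v \in \var{f}$, and for this I would first record the auxiliary fact that $m'(e) = m(e)$ for any Boolean expression $e$ over $\var{f}$ (both sub-spaces agree on all variables occurring in $e$). As a preliminary step I would also verify that $\perm$ is well defined, i.e. that for $v \in \pert$ the combination $m(v^k) = m(v^o) = 1$ cannot occur: from $g_{v^o} = v^o \land \neg v^k$ we would get $m(g_{v^o}) = \min_{\leq_t}(1, \neg 1) = 0$ while $m(v^o) = 1$, and since $0 \not\leq_s 1$ this contradicts \Cref{theo:trap-space-char} applied to $g$. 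Hence $\perm \colon \pert \to \threed$ is total and $f^{\perm}$ makes sense.

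The core of the argument is then a case analysis on $v \in \var{f}$. If $v \notin \pert$, then $f^{\perm}_v = f_v = g_v$, so $m'(f^{\perm}_v) = m(f_v) = m(g_v) \leq_s m(v) = m'(v)$ because $m$ is a trap space of $g$. If $v \in \pert$, I would substitute the given Boolean values $m(v^k), m(v^o) \in \twod$ into $g_v = \neg v^k \land (v^o \lor f_v)$ and compute $m(g_v)$ in the three admissible cases. When $\perm(v) = \star$ (so $m(v^k) = m(v^o) = 0$) this gives $m(g_v) = m(f_v)$, and since $f^{\perm}_v = f_v$ we obtain $m'(f^{\perm}_v) = m(f_v) = m(g_v) \leq_s m(v) = m'(v)$. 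When $\perm(v) = 1$ (so $m(v^k) = 0$, $m(v^o) = 1$) we get $m(g_v) = 1$, hence $1 \leq_s m(v)$, i.e. $m(v) \in \{1, \star\}$; as $f^{\perm}_v$ is the constant $1$ this yields $m'(f^{\perm}_v) = 1 \leq_s m'(v)$. The case $\perm(v) = 0$ is symmetric: $m(g_v) = 0 \leq_s m(v)$ forces $m(v) \in \{0, \star\}$, so $m'(f^{\perm}_v) = 0 \leq_s m'(v)$. Collecting the cases gives the required inequality for all $v \in \var{f}$, and \Cref{theo:trap-space-char} then concludes that $m'$ is a trap space of $f^{\perm}$.

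I do not anticipate a genuine obstacle: the content is a finite computation in Kleene three-valued logic. The step I would treat most carefully is keeping the two orders apart --- every inequality in the argument is with respect to the \emph{partial} order $\leq_s$, so conclusions such as ``$1 \leq_s m(v)$ implies $m(v) \in \{1,\star\}$'' rely on $\leq_s$ not being total, whereas the evaluations of $m(g_v)$ are done with $\min_{\leq_t}$ and $\max_{\leq_t}$ relative to the \emph{total} order $\leq_t$. That bookkeeping, together with the well-definedness of $\perm$, is the only non-mechanical part.
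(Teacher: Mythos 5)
Your proposal is correct and follows essentially the same route as the paper's proof: it first rules out the combination $m(v^k) = m(v^o) = 1$ via the update function $g_{v^o}$ to ensure $\perm$ is well defined, and then verifies $m'(f^{\perm}_v) \leq_s m'(v)$ by the same case split on $v \notin \pert$ versus the three admissible perturbation values, concluding with \Cref{theo:trap-space-char}. The only cosmetic difference is that you make the agreement $m'(e) = m(e)$ on expressions over $\var{f}$ an explicit auxiliary fact, which the paper uses implicitly.
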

\begin{proof}
	Assume that there exists \(v \in \pert\) such that \(m(v^k) = m(v^o) = 1\).
	Since \(m\) is a trap space of \(g\), \(m(g_{v^o}) \leq_s m(v^o)\).
	We have \(m(g_{v^o}) = m(v^o \land \neg v^k) = 0\), whereas \(m(v^o) = 1\), leading to \(0 \leq_s 1\), which contradicts to the definition of \(\leq_s\).
	Hence, the case of \(m(v^k) = m(v^o) = 1\) is impossible for every \(v \in \pert\).
	Since \(m(v) \neq \star\) for every \(v \in \proj\), \(\perm\) is well specified.
	
	Recall that \(\var{g} = \var{f} \cup \proj\) and \(\var{f} = \var{f^{\perm}}\).
	Consider \(v \in \var{f^{\perm}}\).
	If \(v \not \in \pert\), we have \(m'(f^{\perm}_v) = m'(f_v) = m'(g_v) = m(g_v) \leq_s m(v) = m'(v)\).
	If \(v \in \pert\), we have the following cases:
	
	\textbf{Case 1}: \(m(v^k) = 0\) and \(m(v^o) = 0\).
	Then \(\perm(v) = \star\), thus \(m'(f^{\perm}_v) = m'(f_v)\).
	We have \(m(g_v) = m(\neg v^k \land (v^o \lor f_v)) = m(f_v) \leq_s m(v) = m'(v)\).
	Since \(m'(f_v) = m(f_v)\), it follows that \(m'(f^{\perm}_v) \leq_s m'(v)\).
	
	\textbf{Case 2}: \(m(v^k) = 1\) and \(m(v^o) = 0\).
	Then \(\perm(v) = 0\), thus \(m'(f^{\perm}_v) = 0\).
	We have \(m(g_v) = m(\neg v^k \land (v^o \lor f_v)) = 0 \leq_s m(v) = m'(v)\).
	Hence, \(m'(f^{\perm}_v) \leq_s m'(v)\).
	
	\textbf{Case 3}: \(m(v^k) = 0\) and \(m(v^o) = 1\).
	Then \(\perm(v) = 1\), thus \(m'(f^{\perm}_v) = 1\).
	We have \(m(g_v) = m(\neg v^k \land (v^o \lor f_v)) = 1 \leq_s m(v) = m'(v)\).
	Hence, \(m'(f^{\perm}_v) \leq_s m'(v)\).
	
	Now we can conclude that \(m'(f^{\perm}_v) \leq_s m'(v)\) for every \(v \in \var{f^{\perm}}\).
	Hence, \(m'\) is a trap space of \(f^{\perm}\).
\end{proof}

\correctnessperturbationreductionMTS*
\begin{proof}
	First, we consider a perturbation \(\perm \colon \pert \to \threed\) of BN \(f\).
	Let \(m_{\proj} \colon \proj \to \twod\) be a mapping such that for every \(v \in \pert\), \(\perm(v) = 1\) if and only if \(m_{\proj}(v^k) = 0\) and \(m_{\proj}(v^o) = 1\), \(\perm(v) = 0\) if and only if \(m_{\proj}(v^k) = 1\) and \(m_{\proj}(v^o) = 0\), and \(\perm(v) = \star\) if and only if \(m_{\proj}(v^k) = 0\) and \(m_{\proj}(v^o) = 0\).
	Recall that \(\var{g} = \var{f} \cup \proj\) and \(\var{f} = \var{f^{\perm}}\).
	Let \(m\) be a minimal trap space of \(f^{\perm}\) and \(m \models \phen\).
	Let \(m'\) be a sub-space of \(g\) such that \(m'(v) = m(v)\) if \(v \in \var{f}\) and \(m'(v) = m_{\proj}(v)\) if \(v \in \proj\).
	We show that \(m'\) is a minimal trap space of \(g\) and \(m' \models \phen\) (1).
	
	Consider \(v \in \pert\),
	we have \(m'(g_{v^k}) = m'(v^k)\).
	If \(m'(v^k) = 0\), then \(m'(v^o \land \neg v^k) = m'(v^o)\).
	If \(m'(v^k) = 1\), then \(m'(v^o) = 0\) due to the definition of \(m_{\proj}\), leading to \(m'(v^o \land \neg v^k) = 0 = m'(v^o)\).
	Hence, we can derive that \(m'(g_{v^o}) = m'(v^o \land \neg v^k) = m'(v^o)\).
	Regarding \(m'(g_v) = m'(\neg v^k \land (v^o \lor f_v))\), we have the following cases:
	
	\textbf{Case 1}: \(\perm(v) = \star\).
	Then \(m'(v^k) = 0\) and \(m'(v^o) = 0\).
	We have \(m'(g_v) = m'(f_v) = m(f_v) = m(f^{\perm}_v) \leq_s m(v) = m'(v)\).

	\textbf{Case 2}: \(\perm(v) = 1\).
	Then \(m'(v^k) = 0\) and \(m'(v^o) = 1\).
	We have \(m'(g_v) = 1 = m(f^{\perm}_v) \leq_s m(v) = m'(v)\).

	\textbf{Case 3}: \(\perm(v) = 0\).
	Then \(m'(v^k) = 1\) and \(m'(v^o) = 0\).
	We have \(m'(g_v) = 0 = m(f^{\perm}_v) \leq_s m(v) = m'(v)\).
	Consider \(v \in \var{f} \setminus \pert\).
	We have \(m'(g_v) = m'(f_v) = m(f_v) = m(f^{\perm}_v) \leq_s m(v) = m'(v)\).
	
	Now, we can conclude that \(m'\) is a trap space of \(g\).
	Assume that \(m'\) is not minimal.
	Then there is a trap space \(n\) of \(g\) such that \(n <_s m'\).
	Since \(m'(v) \neq \star\) for every \(v \in \proj\), \(n(v) = m'(v)\) for every \(v \in \proj\), leading to \(n(v) \neq \star\) for every \(v \in \proj\).
	Following the~\Cref{lem:trap-space-projection}, \(n'\) is a trap space of \(f^{\perm}\) where \(n'(v) = n(v)\) for every \(v \in \var{f}\).
	Since \(n(v) = m'(v)\) for every \(v \in \proj\), we have \(n' <_s m\), which contradicts to the \(\leq_s\)-minimality of \(m\) w.r.t. \(f^{\perm}\).
	Hence, \(m'\) is a minimal trap space of \(g\).
	In addition, since \(\phen\) only contains the variables in \(\var{f}\), it is trivial that \(m' \models \phen\).
	
	Second, we consider a minimal trap space \(m\) of \(g\) such that \(m \models \phen\).
	By~\Cref{lem:transformed-BN-mts-fixed-value}, \(m(v) \neq \star\) for every \(v \in \proj\).
	The case of \(m(v^k) = m(v^o) = 1\) is impossible because if it holds, then \(m(g_{v^o}) = m(v^o \land \neg v^k) = 0 \leq_s m(v^o) = 1\), which is a contradiction.
	Let \(\perm\) be the perturbation of \(f\) such that \(\perm(v) = 1\) if and only if \(m(v^k) = 0\) and \(m(v^o) = 1\), \(\perm(v) = 0\) if and only if \(m(v^k) = 1\) and \(m(v^o) = 0\), and \(\perm(v) = \star\) if and only if \(m(v^k) = 0\) and \(m(v^o) = 0\).
	Let \(m'\) be a sub-space of \(f\) such that \(m'(v) = m(v)\) for every \(v \in \var{f}\).
	We show that \(m'\) is a minimal trap space of \(f^{\perm}\) and \(m' \models \phen\) (2).
	
	By Lemma~\ref{lem:trap-space-projection}, \(m'\) is a trap space of \(f^{\perm}\).
	Assume that \(m'\) is not minimal.
	Then there is a minimal trap space \(n\) of \(f^{\perm}\) such that \(n <_s m'\).
	Let \(n'\) be a sub-space of \(g\) such that \(n'(v) = m(v)\) for every \(v \in \proj\) and \(n'(v) = n(v)\) for every \(v \in \var{f}\).
	By following the same reasoning for (1), we have \(n'\) is a trap space of \(g\).
	However, \(n' <_s m\), which contradicts to the \(\leq_s\)-minimality of \(m\) w.r.t. \(g\).
	Hence, \(m'\) is a minimal trap space of \(f^{\perm}\).
	In addition, since \(\phen\) only contains the variables in \(\var{f}\), it is trivial that \(m' \models \phen\).
	
	From (1) and (2), we can conclude that, given  $f, \pert$, and $\phen$, the result of the counting problem \(\acthirdmts{}\) is equivalent to the number of minimal trap spaces of \(g\) that satisfy \(\phen\) where multiple minimal trap spaces with the same values on the variables in \(\proj\) are only counted once.
	By~\Cref{theo:encoding-correctness-mts}, the answer sets of \(\tsconjp{g} \cup \toaspalgname{\phen}\) one-to-one correspond to the minimal trap spaces of \(g\) satisfying \(\phen\).
	The set \(\aproj\) includes the atoms of \(\tsconjp{g} \cup \toaspalgname{\phen}\) corresponding to the variables in \(\proj\) of \(g\).
	It follows that the number of answer sets of \(\tsconjp{g} \cup \toaspalgname{\phen}\) projected to \(\aproj\) is equal to the number of minimal trap spaces of \(g\) satisfying \(\phen\) projected to \(\proj\).
	This implies that \(\acthirdmts{}\) can be computed as the projected answer set counting query \(\projaspcount{\tsconjp{g} \cup \toaspalgname{\phen}, \aproj}\).
\end{proof}

Finally, we show the formal proof of~\Cref{theo:correctness-perturbation-reduction-FIX}.

\correctnessperturbationreductionFIX*
\begin{proof}
	The proof technique of~\Cref{theo:correctness-perturbation-reduction-MTS} can be similarly extended for \acthirdfix{} with the program $\faspp{g}$.
\end{proof}

\section{Details of Example ASP Programs}\label{sec:detailed-ASP-programs}

The ASP program \(\tsconjp{f}\) considered in~\Cref{exam:third-MTS-projected} is:
\begin{align*}
	&\pos{a} \vee \ngt{a} \leftarrow \top \quad \quad \pos{a} \leftarrow \pos{a}, \ngt{b} \quad \quad \ngt{a} \leftarrow \aux{1} \quad \quad \aux{1} \leftarrow \ngt{a} \quad \quad \aux{1} \leftarrow \pos{b} \\
	&\pos{b} \vee \ngt{b} \leftarrow \top \\
	&\pos{b} \leftarrow \ngt{b^k}, \aux{2} \quad \quad \aux{2} \leftarrow \pos{b^o}  \quad \quad \aux{2} \leftarrow \pos{a} \\
	&\ngt{b} \leftarrow \aux{3} \quad \quad \aux{3} \leftarrow \pos{b^k} \quad \quad \aux{3} \leftarrow \ngt{b^o}, \ngt{a} \\
	&\pos{b^k} \vee \ngt{b^k} \leftarrow \top \quad \quad \pos{b^k} \leftarrow \pos{b^k} \quad \quad \ngt{b^k} \leftarrow \ngt{b^k} \\
	&\pos{b^o} \vee \ngt{b^o} \leftarrow \top \qquad \pos{b^o} \leftarrow \pos{b^o}, \ngt{b^k} \qquad \ngt{b^o} \leftarrow \aux{4} \quad \aux{4} \leftarrow \ngt{b^o} \quad \aux{4} \leftarrow \pos{b^k}
\end{align*}

\noindent The ASP program \(\toaspalgname{\phen}\) considered in~\Cref{exam:third-MTS-projected} is:
\begin{align*}
	&\bot \leftarrow \pos{a} \qquad \bot \leftarrow \dng{\ngt{a}} \\
	&\bot \leftarrow \pos{b} \qquad \bot \leftarrow \dng{\ngt{b}} \\
\end{align*}

\section{Details of Benchmark Instances}\label{sec:detailed-benchmark-inputs}

Our benchmarks are based on the $245$ real-world biological models from the BBM dataset~\cite{pastva2023repository} available at the time of writing, plus 400 random Boolean networks that were first tested as part of~\cite{TBPS2024}. The real-world Boolean models range up to 1076 variables, out of which up to 223 are source variables. However, the median network size in this dataset is less than 100 variables. As such, we also consider larger, random Boolean networks ranging between 1,000 and 5,000 variables. 
All source variables across all networks are left unrestricted, meaning they can take the value 0 or 1, thereby maximizing the number of admissible trap spaces.

These 645 instances are used directly as inputs for benchmarking the \acfirstfix{} and \acfirstmts{} problems. To test \acsecondfix{} and \acsecondmts{}, we augment each network with a pseudo-random phenotype specification. Here, the specification is chosen as follows: we first compute an arbitrary, fixed minimal trap space using \tsconj. We then randomly select the values of three fixed variables --- excluding all the source variables of the network. 
The conjunction of these values represents the tested phenotype.
This process ensures that for each Boolean network, problem \acsecondmts{} always has at least one valid minimal trap space solution (existence of a fixed point solution cannot be guaranteed regardless of the chosen phenotype).

Finally, to evaluate \acthirdfix{} and \acthirdmts{}, we use the same phenotype but also pseudo-randomly select up to 50 perturbable variables, excluding both the source variables and those fixed by the phenotype. For networks with less than 50 such candidates, we simply select all viable variables as perturbable. We then use the transformation proposed in~\Cref{def:BN-perturbation-trans} to construct a new variant of each benchmark network in which the selected variables can be perturbed. Each such perturbed network, together with the phenotype, represents the input for \acthirdfix{} and \acthirdmts{}.

\section{Comparison on the Numbers of Solutions and Solved Instances}\label{sec:comparison-num-solved}

The comparison is presented in Figure~\ref{fig:number_of_soln}.

\begin{figure*}
    \centering
    \begin{subfigure}[t]{0.45\textwidth}
        \centering
        \includegraphics[width=0.95\linewidth]{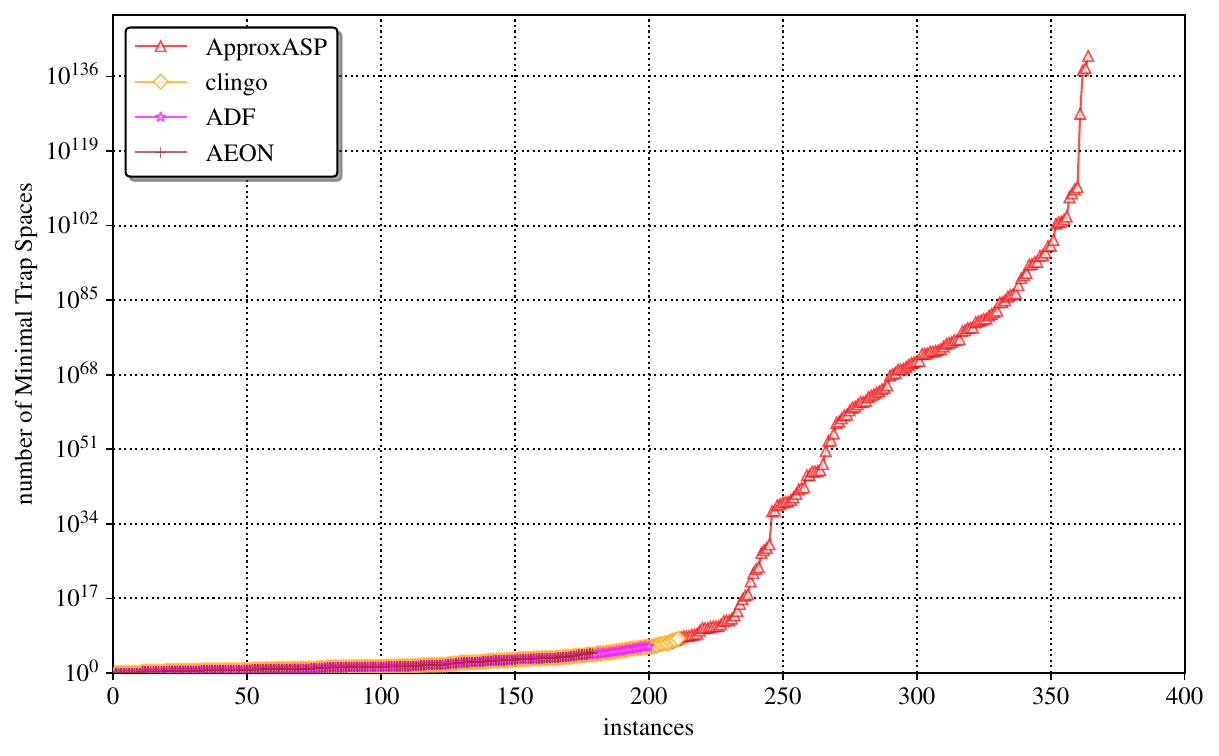}
        \caption{\acfirstmts{}}
    \end{subfigure}
    \begin{subfigure}[t]{0.45\textwidth}
        \centering
        \includegraphics[width=0.95\linewidth]{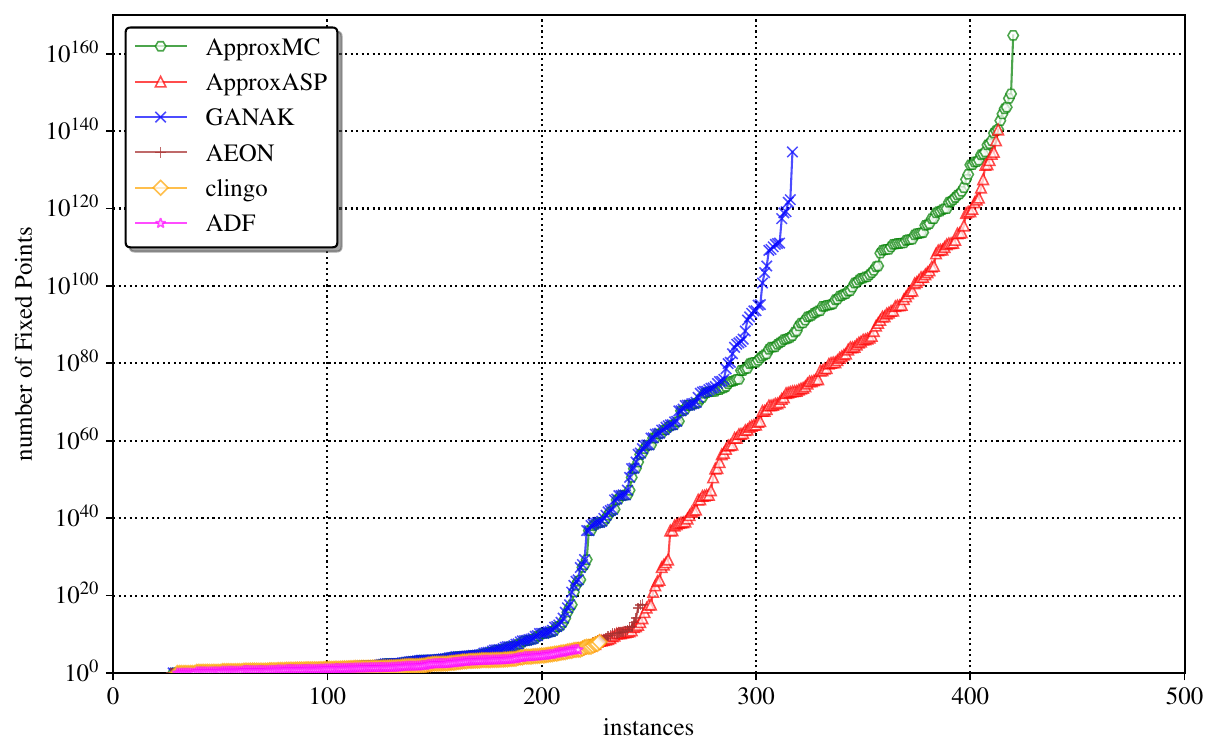}
        \caption{\acfirstfix{}}
    \end{subfigure}
    \begin{subfigure}[t]{0.45\textwidth}
        \centering
        \includegraphics[width=0.95\linewidth]{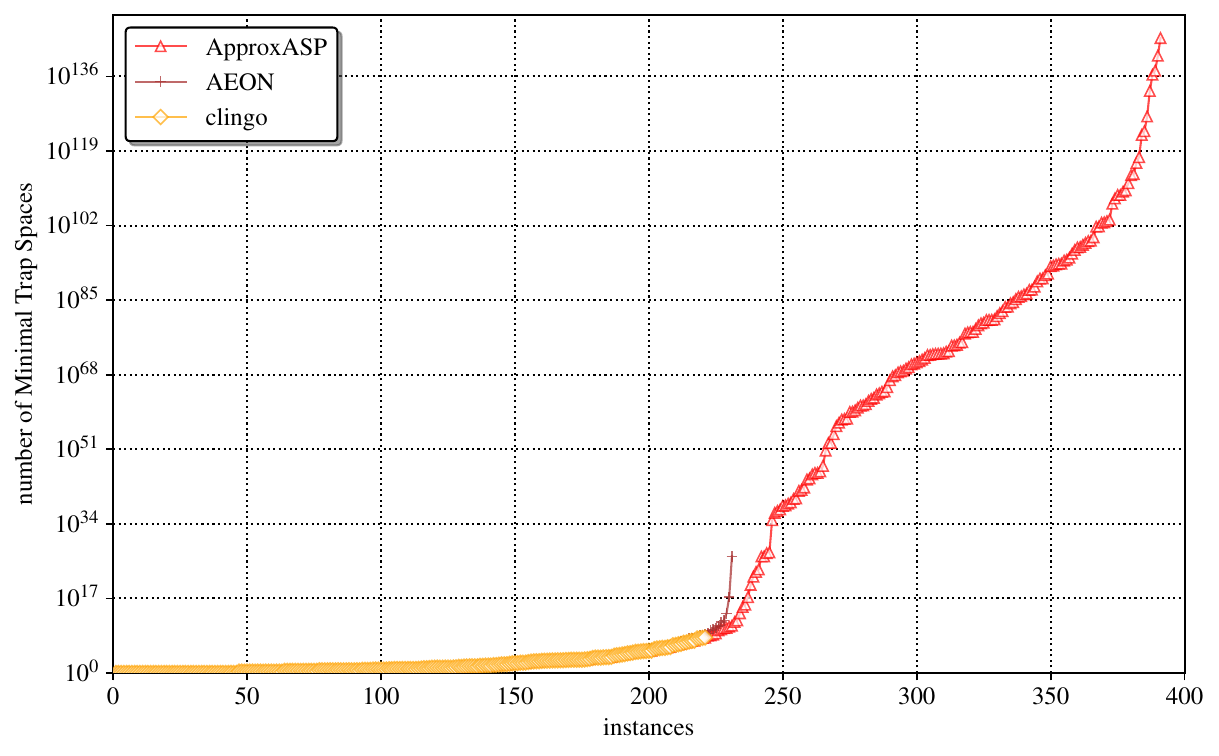}
        \caption{\acsecondmts{}}
    \end{subfigure}
    \begin{subfigure}[t]{0.45\textwidth}
        \centering
        \includegraphics[width=0.95\linewidth]{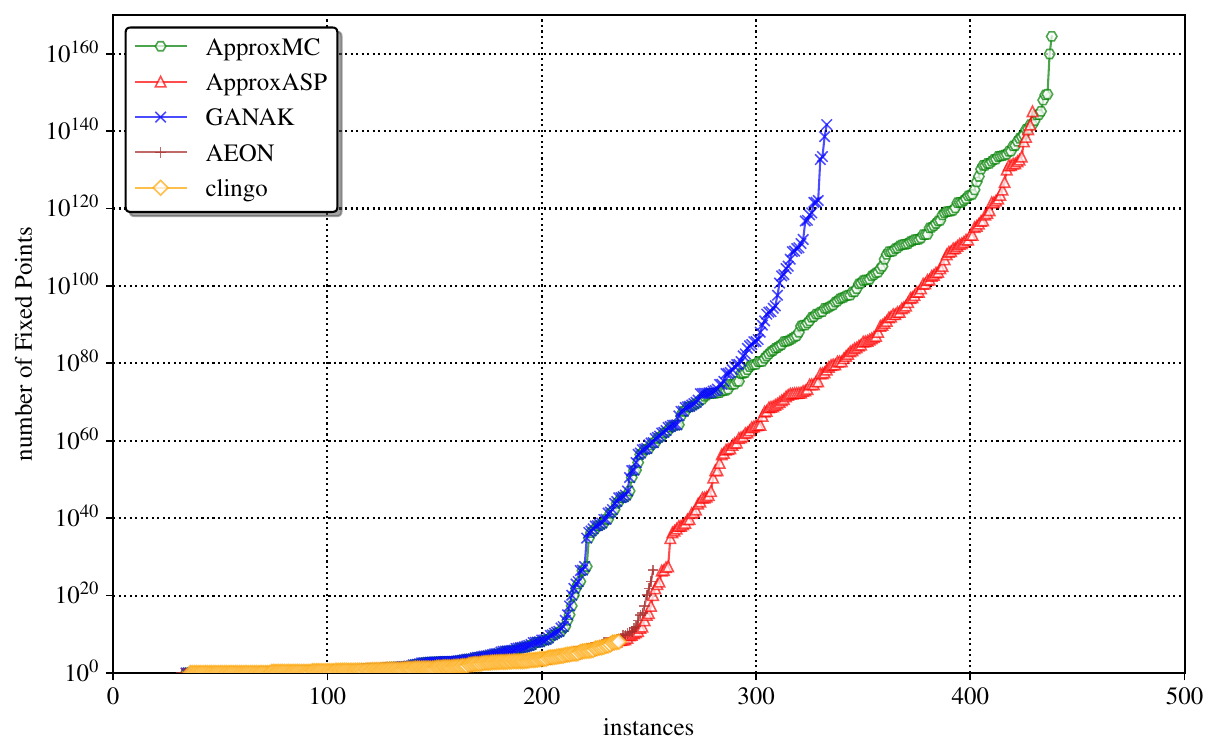}
        \caption{\acsecondfix{}}
    \end{subfigure}
    \begin{subfigure}[t]{0.45\textwidth}
        \centering
        \includegraphics[width=0.95\linewidth]{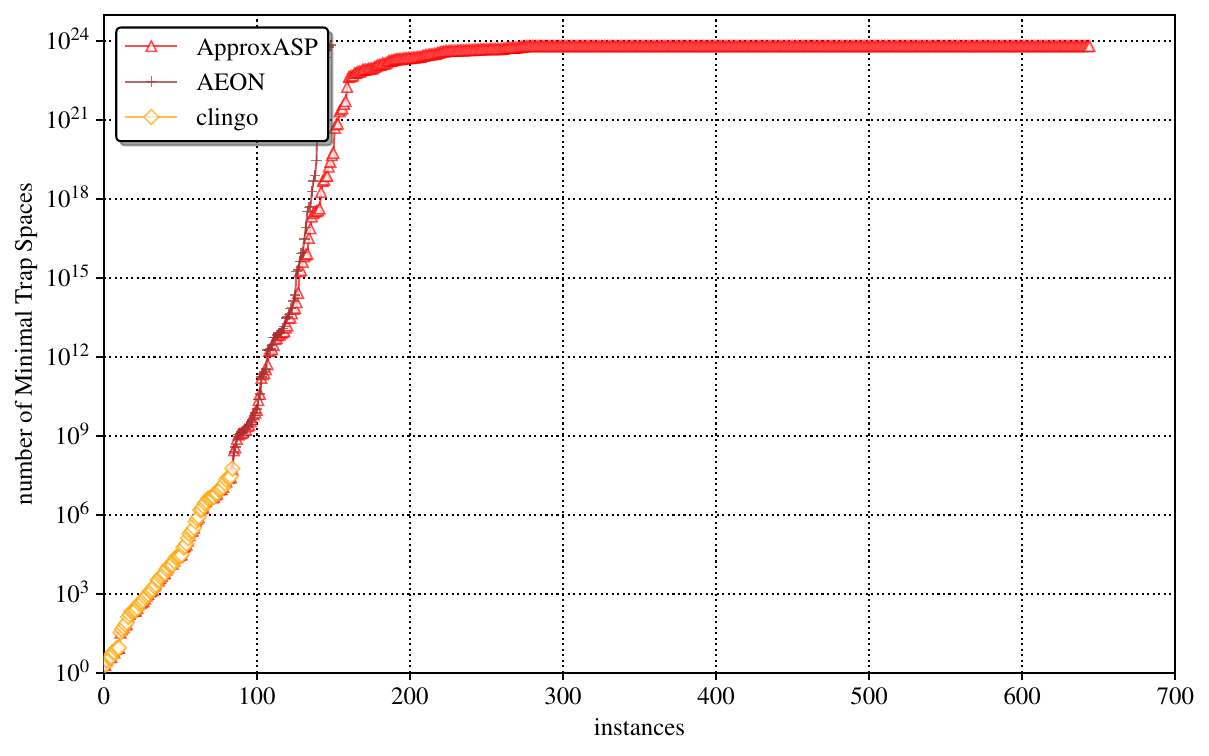}
        \caption{\acthirdmts{}}
    \end{subfigure}
    \begin{subfigure}[t]{0.45\textwidth}
        \centering
        \includegraphics[width=0.95\linewidth]{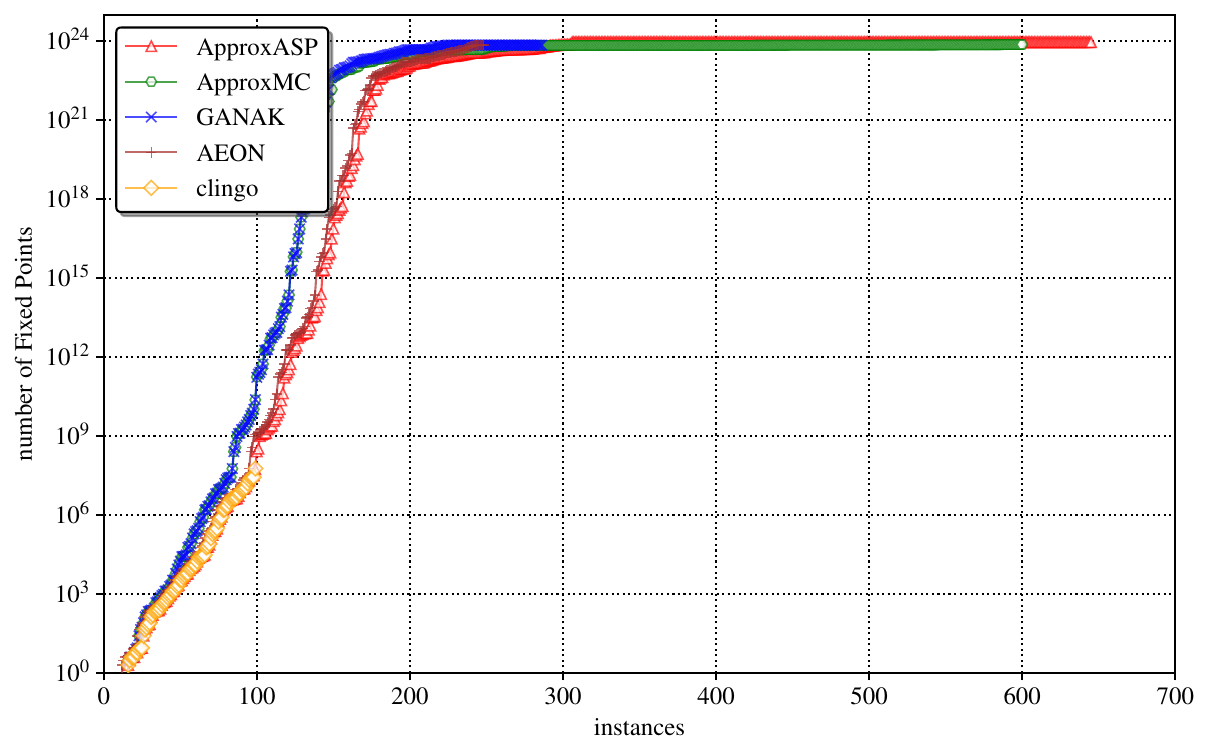}
        \caption{\acthirdfix{}}
    \end{subfigure}
    \caption{The comparison of the number of solutions (or minimal trap spaces and fixed points) for instances solved by different counters.
    A point $(x,y)$ on a plot indicates that a tool can count $x$ instances and each of these instances has at most $y$ solutions.}
    \label{fig:number_of_soln}
\end{figure*}

\section{Phenotype robustness analysis, case study of Interferon 1 model}
\label{sec:case-study}

Model no. 118 in the BBM dataset~\cite{pastva2023repository} represents an interaction network related to the activation of the so-called \emph{Interferon 1}, a biochemical species closely tied to immune response present in T-cells. The model was initially derived using~\cite{SVATSAJ2020} and then later tuned by domain experts to correctly capture relevant biological phenotypes. It consists of 121 variables, of which 55 are ``inputs'', meaning they are not regulated by other variables.

The model defines three phenotype variables, \emph{ISG} (full name \emph{ISG expression antiviral response phenotype}), \emph{PCK} (full name \emph{Proinflammatory cytokine expression inflammation phenotype}), and \emph{IFN} (full name \emph{Type 1 IFN response phenotype}). As these are separate output variables of the network, each trap space can exhibit any combination of active and inactive phenotype variables.

Since the model defines response of T-cells to immunological stimuli and environmental factors, it is important to understand how these mechanisms respond to potential permanent perturbations, either due to genetic mutations or therapeutic treatments. Here, we provide an overview of the model phenotypes through the lens of phenotype robustness.

For simplicity, we selected 20 variables of the model as potential perturbation targets. In reality, this choice would be further influenced by known genetic risk factors or drug targets. Consequently, this choice results in $3^{20} = 3486784401$ admissible perturbations in our Boolean system. We then consider two phenotype variants: First, where a single phenotype variable is expected to be $1$ and the remaining phenotype variables are unconstrained (i.e. they can be $0$, $1$, or $\star$), yielding three combinations. And second, more specific one, where each phenotype variable is fixed to either $1$ or $0$, yielding eight combinations.

The results for C-MTS-3 and the subsequent robustness computation are given in Table~\ref{table:perturbation-robustness}. Here, we can notice several biologically interesting outcomes:

\begin{itemize}
	\item Regardless of perturbation, the network always exhibits a trap space with the \emph{ISG} phenotype. The remaining phenotypes are usually also present ($r = 0.606$ and $r = 0.663$), but are significantly less robust than \emph{ISG}. This indicates that (assuming favorable environmental conditions), expression of ISG as a response to viral activity is robust and cannot be disrupted by a perturbation.
	\item Among the fully defined phenotypes, $010$ and $000$ are the least robust while $111$ is the most robust. This shows the general tendency of T-cells to reliably and consistently respond to immunological stimuli, as the $111$ phenotype indicates the maximal level of immunological activity.
	\item Even though \emph{ISG} is the most robust phenotype when taken in isolation, phenotypes with active \emph{IFN} generally achieve higher robustness when other phenotypes are required to be inactive.
\end{itemize}

Such outcomes serve several functions: First, they can be used to validate (or refute) assumptions about the biological tendencies of the studied system. Second, if observations do not match model predictions, better quantitative understanding can provide possible sources of further model refinement. Third, this knowledge can enable us to better (more reliably or efficiently) select a phenotype that should be targeted by a treatment among several related but distinct cellular phenotypes. Finally, note that the number of solutions in each case is significantly higher than what would be countable using standard enumeration, underscoring the importance of dedicated counting methods, and approximate counting in particular.

\begin{table}
	\centering
	\caption{Phenotype robustness analysis for the Interferon 1 model. First three columns describe the desired phenotype ($-$ meaning the value is unconstrained). The C-MTS-3 column lists the number of perturbations for which said phenotype appears in the network. Finally, robustness $r$ indicates what portion of the possible perturbations still enable the corresponding phenotype.}
	\begin{tabular}{c | c | c | c | c}
		ISG & PCK & IFN & C-MTS-3 & Robustness ($r$) \\\hline
		1 & - & - & 3486784401 & 1.000 \\
		- & 1 & - & 2114072298 & 0.606 \\
		- & - & 1 & 2313362673 & 0.663 \\\hline
		0 & 0 & 0 & 478296900 & 0.137 \\
		1 & 0 & 0 & 621785970 & 0.178 \\
		0 & 1 & 0 & 478296900 & 0.137 \\	
		0 & 0 & 1 & 813104730 & 0.233 \\	
		1 & 1 & 0 & 782989740 & 0.224 \\
		1 & 0 & 1 & 1096362783 & 0.314 \\
		0 & 1 & 1 & 813104730 & 0.233 \\
		1 & 1 & 1 & 1409735826 & 0.404 \\
	\end{tabular}
	\label{table:perturbation-robustness}
\end{table}

\end{document}